\documentclass[12pt,onecolumn,oneside,a4paper,peerreview]{IEEEtran_kit}
\usepackage{cite}
\usepackage{float}
\usepackage{graphicx}
\usepackage{amsmath}
\usepackage{times}
\usepackage{graphicx}
\usepackage{latexsym}
\usepackage{graphicx}
\usepackage{bm}
\usepackage{amssymb}
\usepackage[center]{caption2}
\usepackage{stfloats}
\usepackage{cases}
\usepackage{array}
\usepackage{setspace}
\usepackage{fancyhdr}
\usepackage{color}
\usepackage{subfigure}
\usepackage{citesort}
\usepackage{multirow}
\usepackage{amsmath}
\usepackage{cases}
\usepackage[noend]{algpseudocode}
\usepackage{algorithmicx,algorithm}
\def\bm#1{\mbox{\boldmath $#1$}}

\newtheorem{theorem}{Theorem}
\newtheorem{lemma}{Lemma}

\newtheorem{corollary}{Corollary}

\newtheorem{proposition}{Proposition}
\newtheorem{remark}{Remark}
\setlength{\textheight}{246mm} \setlength{\voffset}{0mm}

\setlength{\textheight}{248mm} 

\doublespacing

\setlength{\textwidth}{6.6in} \addtolength{\hoffset}{0.33in}

\begin{document}
\title{Angle-Domain Intelligent Reflecting Surface Systems: Design and Analysis}
\author{\authorblockN{Xiaoling Hu,  Caijun Zhong, and Zhaoyang Zhang
\thanks{X. Hu, C. Zhong, and Z. Zhang are with the College of Information Science and Electronic Engineering, Zhejiang University, Hangzhou, China (Email:  \{11631052, caijunzhong, ning\_ming\}@zju.edu.cn).}
}}
\maketitle
\begin{abstract}
This paper considers an angle-domain intelligent reflecting surface (IRS)  system. We derive maximum likelihood (ML) estimators for the effective angles from the base station (BS) to the user  and  the effective angles of propagation from the IRS to the user. It is demonstrated that the accuracy of the estimated angles improves with the number of BS antennas. Also, deploying the IRS closer to the BS increases the  accuracy of the estimated angle from the IRS to the user.
Then, based on the estimated angles, we propose a joint optimization of BS beamforming and IRS beamforming, which achieves  similar performance to two   benchmark algorithms based on full CSI  and the multiple signal classification (MUSIC) method respectively. Simulation results show that the optimized BS beam  becomes more focused towards the IRS direction as the number of reflecting elements increases.
Furthermore, we derive a closed-form  approximation, upper bound  and lower bound for the achievable rate. The analytical findings indicate that  the achievable rate can be  improved by increasing the number of BS antennas or reflecting elements. Specifically, the BS-user link and  the BS-IRS-user link can obtain power gains of order $N$ and $NM^2$, respectively, where $N$ is the antenna number and $M$ is the number of reflecting elements.
\end{abstract}

\newpage

\section{Introduction}
Wireless networks are anticipated to connect over 28.5
billion devices by the year 2022\cite{WhitePaper}, giving rise to questions of coverage, quality of service, reliability and energy consumption.
Recently, the intelligent reflecting surface (IRS) has been proposed as  a promising technology to address these questions, due to its features of both low energy consumption\cite{huang2019reconfigurable,Hu1,Hu2,Hu3,Hu4,wu2019weighted,bjornson2019intelligent} and low hardware complexity\cite{han2019large,qingqing2019towards}.

Specifically, an IRS is a meta-surface equipped with a large number of nearly passive, low-cost, reflecting elements with programmable parameters\cite{cui2014coding,liaskos2018new}. Assisted by a smart controller, each reflecting element independently reflects the incident  signal with adjustable phase shifts. With the ability to smartly alter phase shifts, the IRS can generate directional beams which are dynamically adjusted according to the operating requirements and conditions, thereby realizing programmable propagation environments \cite{chen2019intelligent}. Given that there is no amplifier used, the IRS is more energy-efficient than an  Amplify-and-Forward (AF) relay. Another benefit of the IRS is that it can potentially be easily deployed, for example on  building facades, due to its small size that is the consequence of an absence of radio frequency (RF) circuitry \cite{huang2018energy}.

Hence, IRS-aided communications have attracted considerable attention as a promising means to realize dynamically adjustable wireless propagation channels, which can be used e.g. to fill coverage holes at a low cost. In particular, there have been a plethora of works on IRS beamforming, see e.g.  \cite{wu2019intelligent,wu2018intelligent,ye2019joint,guo2019weighted,huang2018energy,yang2019intelligent,wang2019intelligent,pan2019intelligent,chu2019intelligent,wu2019beamforming,abeywickrama2019intelligent,yu2020irs}. In \cite{wu2019intelligent} and \cite{wu2018intelligent}, the authors first presented the joint optimization of IRS beamforming and BS transmit beamforming. By applying a semidefinite relaxation (SDR) method, the total transmit power at the base station (BS) is minimized. Later on,  \cite{ye2019joint} considered the minimization  of the symbol error rate by iteratively optimizing the phase shifts and BS precoder, while \cite{guo2019weighted} focused on
maximizing the weighted sum-rate by adopting a Lagrangian dual transform  technique under the assumption of discrete phase shifts.
Finally, in  \cite{huang2018energy}, the energy efficiency maximization problem  was solved via gradient descent search and sequential fractional programming.
Besides, the integration of IRS with other promising technologies has also been investigated, for instance, non-orthogonal multiple access (NOMA) \cite{yang2019intelligent}, millimeter wave (mmWave) \cite{wang2019intelligent}, simultaneous wireless information and power transfer (SWIPT) \cite{pan2019intelligent},   cognitive radio \cite{cognitive}, two-way relaying \cite{Zhangyu}, and physical layer security \cite{Xuwei}.

All the prior works on IRS beamforming  assumed that  perfect channel state information (CSI) is available  at both the BS and the IRS. How to obtain this CSI, however, is still a challenging issue for an IRS system due to the typically large number of reflecting elements and the passive architecture of the IRS.
The typical channel estimation scheme is to estimate the cascaded IRS-aided channel by using various reflection patterns to train all the reflecting elements \cite{nadeem2019intelligent,zheng2019intelligent,he2019cascaded,mishra2019channel}. For example, an element-by-element ON/OFF-based reflection pattern was adopted in \cite{nadeem2019intelligent}, while a new IRS reflection (phase-shift) pattern satisfying the unit-modulus constraint was proposed in \cite{zheng2019intelligent}.
However, the channel training overhead of such a scheme would become prohibitively high as the number of reflecting elements increases.
Besides, since the  CSI is estimated only at the transmitter and receiver,  the IRS has no access to the CSI needed to perform beamforming.
To tackle this problem, the works \cite{subrt2012intelligent,wu2018intelligent}  proposed a two-mode IRS model, where the IRS controller can switch between receiving mode for CSIs and reflecting mode for data transmission.
However, the realization of a receiving mode would lead to higher hardware complexity as well as more power consumption, and diminishes the main strengths of the IRS, namely its low complexity and cost.
  To  reduce the hardware cost, a semi-passive IRS architecture has been proposed \cite{estimation_ML,estimation_CS}, where only a small proportion of IRS elements have the capability of both sensing and reflecting (refereed to as semi-passive elements). As such, with the aid of these semi-passive elements, the IRS estimates the channels between itself and BS/users by leveraging on machine learning \cite{estimation_ML} or compressive sensing \cite{estimation_CS}. However, the computational complexity of the machine learning and compressive sensing methods is usually very high.

Motivated by these observations, in this paper, we propose an angle-domain IRS-aided system model, where effective angles are estimated  by using a low-complexity method, and then exploited to design the BS beam and IRS phase shifts. \footnote{  Like the work \cite{xu2020resource}, the proposed angle-domain framework  exploits the angle sparsity. As such, the channel estimation and the corresponding beamforming design are more efficient compared to those of the conventional IRS-assisted communication. However, for non-sparse channels where there are a large number of scatters in the radio frequency environment, the proposed angle-domain design framework is no longer suitable, since the angular information detection may also be complicated and time-consuming.}
  Although  the angle-domain signal processing technique has been well studied in the existing works (e.g.,\cite{xu2020resource,jian2019angle,shao2020angle,dong2020high}),  this is the first time to introduce the concept of angle domain to  deal with the channel estimation problem in the IRS-aided communication system. By exploiting angle sparsity, the training overhead as well as the computational complexity can be significantly reduced.
The main advantages of this model are listed below:
\begin{itemize}
    \item Only a very limited number of angles, instead of a large-scale channel matrix corresponding to the IRS, are needed, thereby avoiding massive channel training overhead and high computational complexity.
    \item Due to the above, only a very small amount of CSI needs to be communicated from the BS to the IRS. This allows the passive IRS to obtain CSI without adding too much hardware complexity, by linking it to the BS via a low-capacity hardware line.  As such, the BS and the IRS can exchange information (CSI and phase shifts), and thus the joint optimization of BS beamforming and IRS beamforming can be realized.
\end{itemize}
The contributions of this paper are summarized as follows:
\begin{itemize}
    \item  An angle-based channel estimation scheme with extremely low complexity has been proposed. In particular, the maximum likelihood (ML) estimators for effective angles from the BS to the user are presented, based on which the  effective angles from the IRS to the user are derived.
    \item  With a Gaussian model for the estimation error, we propose a joint optimization algorithm for BS beamforming and IRS beamforming,  which achieves  similar performance to two   benchmark algorithms based on full CSI  and the multiple signal classification (MUSIC) method respectively.  Furthermore, we have proved mathematically that the average received signal power is non-decreasing after each iteration of the proposed algorithm.
    \item The expression for the achievable rate is presented, based on which a closed-form approximation, an upper bound   for the achievable rate are derived. The analytical findings show that the BS-user link can obtain a power gain proportional to the number of antennas $N$, while the BS-IRS-user link can obtain a power gain of order $NM^2$.
\end{itemize}

The remainder of the paper is organized as follows. In Section \ref{s1}, we introduce the angle-domain IRS-based system, while in Section \ref{s2}, we propose an angle estimation scheme. Based on the estimated  effective angles, a  joint optimization algorithm for BS beamforming and IRS beamforming is presented in Section \ref{s3}. Then, we give a detailed analysis of the achievable rate in Section \ref{s4}. Numerical results and discussions are provided in Section \ref{s5}, and finally Section \ref{s6} concludes the paper.

Notation: Boldface lower case and upper case letters are used for column vectors and matrices, respectively. The superscripts ${\left(\cdot\right)}^{*}$, ${\left(\cdot\right)}^{T}$, ${\left(\cdot\right)}^{H}$, and ${\left(\cdot\right)}^{-1}$ stand for the conjugate, transpose, conjugate-transpose, and matrix inverse, respectively. Also, the Euclidean norm, absolute value, Hadamard product  are denoted by $\left\| \cdot \right\|$, $\left|\cdot\right|$ and $\odot$ respectively. In addition, $\mathbb{E}\left\{\cdot\right\}$ is the expectation operator, and $\text{tr}\left(\cdot\right)$ represents the trace. $\text{Re}\left( \cdot\right)$ denotes the real part. For a matrix ${\bf A}$, ${[\bf A]}_{mn}$ denotes its entry in the $m$-th row and $n$-th column. For real numbers $a$ and $b$, $\text{Remainder}(\frac{a}{b})$ and $\text{Quotient}(\frac{a}{b})$ respectively denote the Remainder and the quotient of $a$ divided by $b$. Besides, $j$ in $e^{j \theta}$ denotes the imaginary unit.
Finally, $z \sim \mathcal{CN}(0,{\sigma}^{2})$ denotes a circularly symmetric complex Gaussian random variable (RV) $z$ with zero mean and variance $\sigma^2$, and $z \sim \mathcal{N}(0,{\sigma}^{2})$ denotes a real valued Gaussian RV.

\section{system model} \label{s1}
We consider an angle-domain IRS aided system, as illustrated in Fig.\ref{f0}, where the BS equipped with $N$ antennas communicates with a single-antenna user assisted by  an IRS with $M$ reflecting elements. Both the BS and the IRS have uniform rectangular arrays (URA) with the sizes of $\sqrt{N} d_\text{BS}\times \sqrt{N}d_\text{BS}$ and $\sqrt{M}d_\text{IRS}
\times\sqrt{M} d_\text{IRS}$ respectively, where   $d_{\text{BS}}$ is the  distance between two adjacent BS antennas and $d_{\text{IRS}}$ is the  distance between two adjacent reflecting elements.
Furthermore,
the BS is connected with the IRS controller via a backhaul link so that they can exchange information (e.g. CSI and phase shifts).

\begin{figure}[!ht]
  \centering
  \includegraphics[width=3in]{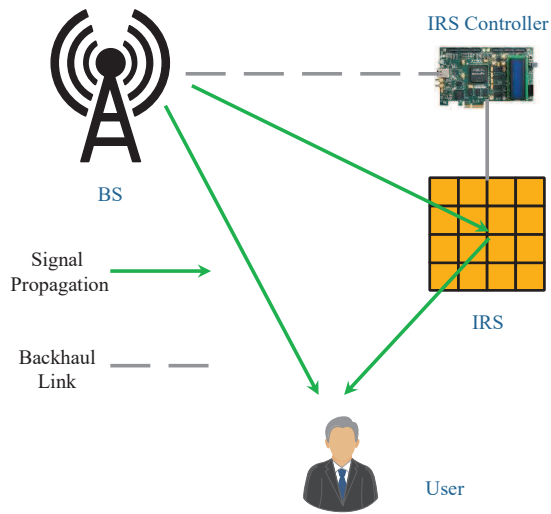}
  \caption{Model of the angle-domain IRS aided system }
  \label{f0}
\end{figure}

An angle-domain  Rician fading channel model is assumed, to model a channel with a line-of-sight (LOS) component. Such a channel corresponds to the practically desirable deployment of an IRS with LOS to both the BS and the user.
Specifically, the channel from the BS to the $\text{IRS}$ can be expressed as
\begin{align} 
{\bf H}_{\text{B2I}}=\sqrt{\alpha_{\text{B2I}}\frac{v_{\text{B2I}}}{v_{\text{B2I}}+1}} {\bf b} \left(\bar\theta_{\text{x-B2Ia}},\bar\theta_{\text{y-B2Ia}} \right)
{\bf a}^T \left(\bar \theta_{\text{x-B2I}},\bar\theta_{\text{y-B2I}} \right)+\sqrt{\alpha_{\text{B2I}}\frac{1}{v_{\text{B2I}}+1}} \widetilde{\bf H}_{\text{B2I}},
\end{align}
where $\widetilde{\bf H}_{\text{B2I}}$ is the non-line of sight (NLOS) component, whose elements have the $\mathcal{CN}\left(0,1\right)$ distribution, $\alpha_{\text{B2I}}$ models large-scale fading, $v_{\text{B2I}}$ is the Rician K-factor, ${\bf a}$ and ${\bf b}$ are array steering vectors of the BS and the IRS, respectively. The two  effective angles  of departure (AODs), i.e., the phase differences between two adjacent antennas  along the $x$ and $y$ axes,  are respectively given by
 \begin{align}
 \bar\theta_{\text{x-B2I}}=-\frac{2 \pi d_{\text{BS}}}{\lambda} \cos \theta_{\text{B2I}} \cos \phi_{\text{B2I}},\\
 \bar\theta_{\text{y-B2I}}=-\frac{2 \pi d_{\text{BS}}}{\lambda} \cos \theta_{\text{B2I}} \sin \phi_{\text{B2I}},
  \end{align}
 where
$\lambda$ is the carrier wavelength, $ \theta_{\text{B2I}}$ and $\phi_{\text{B2I}}$ are the elevation and azimuth  AODs from the BS to the IRS, respectively. Similarly, the two  effective angles of arrival (AOAs) are
\begin{align}
 \bar\theta_{\text{x-B2Ia}}=\frac{2 \pi d_{\text{IRS}}}{\lambda} \cos \theta_{\text{B2Ia}} \cos \phi_{\text{B2Ia}},\\
 \bar\theta_{\text{y-B2Ia}}= \frac{2 \pi d_{\text{IRS}} }{\lambda} \cos \theta_{\text{B2Ia}} \sin \phi_{\text{B2Ia}},
 \end{align}
 where $ \theta_{\text{B2Ia}}$ and $\phi_{\text{B2Ia}}$ are the elevation and azimuth AOAs at the IRS, respectively. Furthermore, we assume $d_{\text{BS}}=d_{\text{IRS}}=\frac{\lambda}{2}$.

Specifically,  the $n$-th element of ${\bf a} \left( \bar\theta_{\text {x-B2I}}, \bar\theta_{\text {y-B2I}} \right)$ and the $m$-th element of ${\bf b} \left( \bar\theta_{\text {x-B2I}}, \bar\theta_{\text {y-B2I}} \right)$ are given by
\begin{align}
&a_n=e^{j\left(
 i_{N,n}  \bar\theta_{\text {x-B2I}}+j_{N,n}  \bar\theta_{\text {y-B2I}}\right)}, \\
&b_{m}=e^{ j\left(
i_{M,m} \bar\theta_{\text {x-B2I}} + j_{M,m}  \bar\theta_{\text {y-B2I}}  \right)},
\end{align}
where
\begin{align}
&  i_{N,n}= \text{Remainder}\left( \frac{n-1}{\sqrt{N}} \right),n=1,...,N,\\ &j_{N,n}=\text{Quotient}\left( \frac{n-1}{\sqrt{N}} \right),n=1,...,N,
\end{align}
\begin{align}
 & i_{M,m}= \text{Remainder}\left( \frac{m-1}{\sqrt{M}} \right),m=1,...,M,\\ &j_{M,m}=\text{Quotient}\left( \frac{m-1}{\sqrt{M}} \right),m=1,...M.
\end{align}

Similarly, the channel from the IRS to the  user is given by
\begin{align}
{\bf h}_{\text{I2U}}^T=\sqrt{\alpha_{\text{I2U}}\frac{v_{\text{I2U}}}{v_{\text{I2U}}+1}}
{\bf b}^T \left(\bar\theta_{\text{x-I2U}},\bar\theta_{\text{y-I2U}} \right)+\sqrt{\alpha_{\text{I2U}}\frac{1}{v_{\text{I2U}}+1}} \widetilde{\bf h}_{\text{I2U}}^T .
\end{align}

Also, the channel from the BS to the  user  can be expressed as
\begin{align}
{\bf h}_{\text{B2U}}^T=\sqrt{\alpha_{\text{B2U}}\frac{v_{\text{B2U}}}{v_{\text{B2U}}+1}}
{\bf a}^T \left(\bar\theta_{\text{x-B2U}},\bar\theta_{\text{y-B2U}} \right)+\sqrt{\alpha_{\text{B2U}}\frac{1}{v_{\text{B2U}}+1}} \widetilde{\bf h}_{\text{B2U}}^T.
\end{align}

During the data transmission, the BS transmits the signal ${\bf w}s$, where ${\bf w}$ is the transmitting beam and $s$ is the data intended for the user. Then, the signal received by the user is given by
\begin{align}
y_{\text{U}}=\left( {\bf h}_{\text{B2U}}^T  + {\bf h}_{\text{I2U}}^T {\bm \Theta} {\bf H}_{\text{B2I}} \right)  {\bf w} s+n_{\text{U}},
\end{align}
where ${\bf w}$ is the BS transmitting beam,  $ n_{\text{U}} \sim \mathcal{CN}\left( 0, \sigma_0^2\right)$ is the noise at the user, and the phase shift matrix ${\bm \Theta}$ is given by  ${\bm \Theta}= {\text {diag}} \left( {\bm \xi}\right)$
with the phase shift  beam $ {\bm \xi}={[ e^{j\vartheta_1},..., e^{j\vartheta_n},...,  e^{j\vartheta_M}]}^T$.

To facilitate the design of BS transmitting beam ${\bf w}$ and IRS phase shift beam ${\bm \xi}$, we first perform angle estimation as illustrated in the next section.

%


\section{Angular Information detection Method}\label{s2}
In this section, we propose a detection method to estimate the  effective angles  from the BS to the user ($\bar\theta_{\text{x-B2U}}$ and $\bar\theta_{\text{y-B2U}}$), based on  which the  effective angles from the IRS to the user
($\bar\theta_{\text{x-I2U}}$ and $\bar\theta_{\text{y-I2U}}$)  are derived. Later, these effective angles will be exploited to design the BS beam and IRS phase shifts in Section \ref{s3}.

During the angle estimation period, the IRS is turned off. The user transmits an unmodulated carrier at the power $P_q$ to the BS. The received baseband signal at the BS is given by
\begin{align}
{\bf r}={\bf h}_{\text{B2U}}^* q+{\bf n}_{\text{BS}}=
\sqrt{\frac{\alpha_{\text{B2U}} v_{\text{B2U}}}{v_{\text{B2U}}+1}}
{\bf a} \left(-\bar\theta_{\text {x-B2U}},-\bar\theta_{\text {y-B2U}} \right)  q +\sqrt{\frac{\alpha_{\text{B2U}} }{v_{\text{B2U}}+1}} \widetilde{\bf h}^*_\text{B2U}q+{\bf n}_{\text{BS}},
\end{align}
where $ q = \sqrt{P_q}e^{j\theta_q}$ is the baseband-equivalent representation of the unmodulated carrier, and ${\bf n}_{\text{BS}}\sim \mathcal{CN} \left( 0,\sigma_{\text{BS},0}^{2}\right) $ is the noise at the BS.

In particular, the signal received by the $n$-th antenna  is
\begin{align}
r_n&=\sqrt{\frac{\alpha_{\text{B2U}} v_{\text{B2U}}}{v_{\text{B2U}}+1}} a_n q + \sqrt{\frac{\alpha_{\text{B2U}} }{v_{\text{B2U}}+1}}\tilde{h}_{\text{B2U},n} q +n_{\text{BS},n} \\
&=\sqrt{\frac{\alpha_{\text{U}} v_{\text{B2U}} }{v_{\text{B2U}}+1}} e^{\left(
 -i_{N,n}  \bar\theta_{\text {x-B2U}}-j_{N,n}  \bar\theta_{\text {y-B2U}}\right)} q+\sqrt{\frac{\alpha_{\text{B2U}} }{v_{\text{B2U}}+1}}\tilde{h}_{\text{B2U},n} q +n_{\text{BS},n}.\nonumber
\end{align}

The phase of $r_n$ can be decomposed as
\begin{align}
\vartheta_{n}= \theta_{\text{q}}- i_{N,n}\bar\theta_{\text {x-B2U}}-j_{N,n}  \bar\theta_{\text {y-B2U}} +e_{n},
\end{align}
where $e_{n}$ is the phase uncertainty caused by noise and NLOS paths.

\begin{proposition} \label{p1}
With a large Rician K-factor and a high received signal-to-noise ratio (SNR) at the BS, the phase uncertainty $e_n$  is approximately circularly symmetric Gaussian $\mathcal{CN}\left(0, \sigma_e^2\right)$ with
\begin{align}
\sigma_e^2= \frac{ 4-\pi }{8 v_{\text{B2U}} }  +\frac{\left( 4-\pi \right)\left(v_{\text{B2U}}+1\right)}{8 \alpha_{\text{B2U}} P_q v_{\text{B2U}} }  \sigma_{\text{BS},0}^{2} .
\end{align}
\end{proposition}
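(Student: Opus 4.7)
The plan is to decompose $r_n$ into the dominant deterministic LOS term and a small additive circularly symmetric complex-Gaussian perturbation, and then to linearise the complex argument about the LOS direction so that the phase error becomes an explicit Gaussian functional of the perturbation.

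First, I would split the received signal at antenna $n$ as $r_n = s_n + w_n$, where $s_n = \sqrt{\alpha_{\text{B2U}}v_{\text{B2U}}/(v_{\text{B2U}}+1)}\,a_n q$ is deterministic once the AODs are fixed and
\begin{equation*}
w_n = \sqrt{\alpha_{\text{B2U}}/(v_{\text{B2U}}+1)}\,\tilde h_{\text{B2U},n} q + n_{\text{BS},n}.
\end{equation*}
Independence of $\tilde h_{\text{B2U},n}\sim\mathcal{CN}(0,1)$ and $n_{\text{BS},n}\sim\mathcal{CN}(0,\sigma_{\text{BS},0}^2)$, together with $|q|^2 = P_q$, yields $w_n \sim \mathcal{CN}(0,\sigma_w^2)$ with $\sigma_w^2 = \alpha_{\text{B2U}}P_q/(v_{\text{B2U}}+1) + \sigma_{\text{BS},0}^2$, while $|s_n|^2 = \alpha_{\text{B2U}} v_{\text{B2U}} P_q/(v_{\text{B2U}}+1)$ is deterministic. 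This already identifies the two natural small parameters $1/v_{\text{B2U}}$ and $\sigma_{\text{BS},0}^2/(\alpha_{\text{B2U}}P_q)$ that will appear additively in $\sigma_e^2$.

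Second, I would factor $r_n = s_n(1+\eta)$ with $\eta = w_n/s_n$. Division by the deterministic complex scalar $s_n$ preserves circular symmetry, so $\eta \sim \mathcal{CN}(0,\sigma_w^2/|s_n|^2)$. The identity $\vartheta_n = \arg(s_n) + \arg(1+\eta)$ matches the deterministic part of the decomposition in the proposition exactly, so $e_n = \arg(1+\eta)$. The hypotheses $v_{\text{B2U}}\gg 1$ and high receive SNR guarantee $\mathbb{E}|\eta|^2 = \sigma_w^2/|s_n|^2 \ll 1$, and hence with overwhelming probability $\arg(1+\eta)$ can be replaced by its first-order expansion about the origin, namely a real linear functional of $\eta$. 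Since that functional of a zero-mean circularly symmetric Gaussian is itself zero-mean Gaussian, the $\mathcal{CN}(0,\sigma_e^2)$ notation of the proposition records this Gaussian limit in the same convention as the ambient noise, and the ``large $v_{\text{B2U}}$'' condition is precisely what validates the linearisation.

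The main obstacle I expect is fixing the exact coefficient $(4-\pi)/8$ in $\sigma_e^2$. The appearance of $4-\pi$ is diagnostic of a Rayleigh-magnitude statistic, since for $w_n \sim \mathcal{CN}(0,\sigma_w^2)$ one has $\mathrm{Var}(|w_n|) = (4-\pi)\sigma_w^2/4$; this points to carrying out the variance computation in polar form, writing $\eta = (|w_n|/|s_n|)e^{j\psi}$ with $\psi$ uniform on $[0,2\pi)$ independent of the Rayleigh magnitude $|w_n|$, and evaluating $\mathbb{E}[e_n^2]$ in a way that preserves the full Rayleigh statistics of $|w_n|/|s_n|$ rather than collapsing them to the second moment $\sigma_w^2/|s_n|^2$. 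Once that constant is pinned down to $(4-\pi)/8$, substituting the expressions for $\sigma_w^2$ and $|s_n|^2$ from the first step is purely algebraic and produces the two-term formula $\sigma_e^2 = (4-\pi)/(8v_{\text{B2U}}) + (4-\pi)(v_{\text{B2U}}+1)\sigma_{\text{BS},0}^2/(8\alpha_{\text{B2U}}P_q v_{\text{B2U}})$ stated in the proposition.
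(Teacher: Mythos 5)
Your setup is essentially the paper's: the paper likewise splits $r_n$ into a deterministic LOS phasor of amplitude $A_{\text{LOS},n}$ and a circularly symmetric Gaussian ``uncertainty'' component of amplitude $A_{\text{unct},n}$ and uniform relative phase $\theta_{\text{ob},n}$, and its triangle (law-of-sines) step $\sin e_n=\frac{A_{\text{unct},n}}{A_{\text{ob},n}}\sin\theta_{\text{ob},n}$, after the small-angle approximation and $A_{\text{ob},n}\approx A_{\text{LOS},n}$ (justified by large $v_{\text{B2U}}$ and high SNR), is exactly your linearisation $e_n\approx\operatorname{Im}(w_n/s_n)$ written in polar coordinates. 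Up to that point the two arguments coincide.

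The gap is that you never actually produce the coefficient $(4-\pi)/8$, and the route you sketch cannot produce it. Writing $\eta=(|w_n|/|s_n|)e^{j\psi}$ with $|w_n|$ Rayleigh and $\psi$ uniform and independent of it, the exact second moment of your linearised phase error is $\mathbb{E}[e_n^2]=\frac{\mathbb{E}[|w_n|^2]}{|s_n|^2}\,\mathbb{E}[\sin^2\psi]=\frac{1}{2}\cdot\frac{\sigma_w^2}{|s_n|^2}$, because the mean square of the Rayleigh envelope is exactly $\sigma_w^2$; ``preserving the full Rayleigh statistics'' changes nothing, since only the second moment enters this expectation. That computation yields $\sigma_e^2=\frac{1}{2v_{\text{B2U}}}+\frac{(v_{\text{B2U}}+1)\sigma_{\text{BS},0}^2}{2\alpha_{\text{B2U}}P_qv_{\text{B2U}}}$, which disagrees with the proposition by a factor of $(4-\pi)/4$. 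The paper reaches $(4-\pi)/8$ by writing $\sigma_e^2=A_{\text{LOS},n}^{-2}\,\mathbb{E}\{|A_{\text{unct},n}|^2\}\,\mathbb{E}\{|\sin\theta_{\text{ob},n}|^2\}$ and then substituting $\frac{4-\pi}{4}\sigma_w^2$ for $\mathbb{E}\{|A_{\text{unct},n}|^2\}$, i.e.\ it uses the \emph{variance} of the Rayleigh amplitude, $\operatorname{Var}(|w_n|)=\frac{4-\pi}{4}\sigma_w^2$, in place of its mean square (together with $\mathbb{E}\{\sin^2\}=\tfrac12$). Whether one regards that substitution as a deliberate modelling choice or as an error in the paper, it is the single step that fixes the constant in the statement, and your proposal neither performs it nor supplies an alternative derivation: you correctly diagnose that $4-\pi$ signals a Rayleigh statistic but leave the decisive computation as an acknowledged to-do, while the computation you do commit to contradicts the claimed $\sigma_e^2$.
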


\begin{proof}
See Appendix \ref{A1}.
\end{proof}
\begin{remark}
From Proposition \ref{p1}, we can observe that the phase uncertainty is a decreasing function with respect to the transmit power. Also, a larger Rician K-factor would lead to less phase uncertainy.
\end{remark}

The observed phase difference between $r_n$ and $r_m$ is
\begin{align}\label{E20}
\Delta \bar\theta_{n,m}=r_n-r_m=-\left(i_{N,n}-i_{N,m}\right) \bar\theta_{\text {x-B2U}}-\left(j_{N,n}-j_{N,m}\right)  \bar\theta_{\text {y-B2U}} +\Delta e_{n,m},
\end{align}
where $\Delta e_{n,m}=e_{n}-e_{m}\sim \mathcal{CN} \left(0, \sigma_{\text{pd}}^2 \right)$ with $ \sigma_{\text{pd
}}^2 = 2 \sigma_{e}^2$.

Then, we use the observed phase difference $\Delta \bar\theta_{n,m}$ to estimate $\bar\theta_{\text {x-B2U}}$ and $\bar\theta_{\text {y-B2U}  }$. Specifically, we choose $\frac{N}{2}$ phase differences
$\Delta \bar\theta_{n,m_n},n=1,...,\frac
{N}{2}, m_n=N-n+1$.

\begin{theorem} {\label{t1}}
The ML estimators for $\bar\theta_{\text {x-B2U}}$ and $\bar\theta_{\text {y-B2U}}$ are given by
\begin{align}
\hat{\bar\theta}_{\text {x-B2U}}=-\frac{6 \sum\limits_{n=1}^{N/2}\left(i_{N,n}-i_{N,m_n}\right) \Delta \bar\theta_{n,m_n}}
{N\left(N-1\right)},\\
\hat{\bar\theta}_{\text {y-B2U}}=-\frac{6 \sum\limits_{n=1}^{N/2}\left(j_{N,n}-j_{N,m_n}\right) \Delta \bar\theta_{n,m_n}}
{N\left(N-1\right)}.
\end{align}
\end{theorem}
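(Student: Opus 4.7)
The plan is to recognise the phase-difference model in equation~(\ref{E20}) as a linear Gaussian regression and to exploit the special structure of the index pairing $m_n = N-n+1$ so that the normal equations decouple.

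First I would stack the $N/2$ observations. Writing $\alpha_n \defeq i_{N,n}-i_{N,m_n}$ and $\beta_n \defeq j_{N,n}-j_{N,m_n}$, and collecting $\mathbf{y}=[\Delta\bar\theta_{1,m_1},\ldots,\Delta\bar\theta_{N/2,m_{N/2}}]^T$, $\bm{\theta}=[\bar\theta_{\text{x-B2U}},\bar\theta_{\text{y-B2U}}]^T$ and $\mathbf{X}$ with rows $(\alpha_n,\beta_n)$, equation~(\ref{E20}) becomes $\mathbf{y}=-\mathbf{X}\bm{\theta}+\Delta\mathbf{e}$. Because the underlying $e_n$'s are independent (each comes from the noise at a distinct antenna) and Proposition~\ref{p1} says they are approximately Gaussian, the differences $\Delta e_{n,m_n}$ are jointly Gaussian with i.i.d.\ components of variance $\sigma_{\text{pd}}^2$, the index sets $\{n,m_n\}$ being pairwise disjoint for $n=1,\ldots,N/2$. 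The ML estimator therefore coincides with the least-squares estimator $\hat{\bm{\theta}}=-(\mathbf{X}^T\mathbf{X})^{-1}\mathbf{X}^T\mathbf{y}$.

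Next I would evaluate $\mathbf{X}^T\mathbf{X}$. Writing $n-1 = j_{N,n}\sqrt{N}+i_{N,n}$ with $0\le i_{N,n},j_{N,n}<\sqrt{N}$, the identity $m_n-1=N-n=(\sqrt{N}-1-j_{N,n})\sqrt{N}+(\sqrt{N}-1-i_{N,n})$ yields $i_{N,m_n}=\sqrt{N}-1-i_{N,n}$ and $j_{N,m_n}=\sqrt{N}-1-j_{N,n}$. Hence $\alpha_n=2i_{N,n}-(\sqrt{N}-1)$ and $\beta_n=2j_{N,n}-(\sqrt{N}-1)$, and moreover $\alpha_{m_n}=-\alpha_n$, $\beta_{m_n}=-\beta_n$, so every quadratic sum over $n=1,\ldots,N/2$ equals half of the same sum over $n=1,\ldots,N$. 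As $n$ ranges over $\{1,\ldots,N\}$ the pair $(i_{N,n},j_{N,n})$ sweeps $\{0,\ldots,\sqrt{N}-1\}^2$ once, so the cross term factorises and each factor vanishes because $\sum_{k=0}^{\sqrt{N}-1}(2k-(\sqrt{N}-1))=0$. Similarly a short computation using $\sum_{k=0}^{\sqrt{N}-1}(2k-(\sqrt{N}-1))^2=\tfrac{1}{3}\sqrt{N}(N-1)$ gives $\sum_{n=1}^{N/2}\alpha_n^2=\sum_{n=1}^{N/2}\beta_n^2=\tfrac{N(N-1)}{6}$. Thus $\mathbf{X}^T\mathbf{X}=\tfrac{N(N-1)}{6}\mathbf{I}_2$, and substituting into $\hat{\bm{\theta}}=-\tfrac{6}{N(N-1)}\mathbf{X}^T\mathbf{y}$ produces the two closed-form expressions claimed in the theorem.

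The only delicate point is the combinatorial step of showing that $\mathbf{X}^T\mathbf{X}$ is a scalar multiple of the identity; everything else is a textbook Gaussian ML argument. I expect the main obstacle to be presenting this cleanly — in particular, making explicit that the antiparallel pairing $m_n=N-n+1$ is precisely what orthogonalises the two ``$x$'' and ``$y$'' regressors and equalises their norms, so that the $2\times 2$ normal equations decouple and no matrix inversion is actually required.
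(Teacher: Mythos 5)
Your proposal is correct, and it reaches the theorem by a genuinely different route from the paper. The paper writes down the Gaussian likelihood of the stacked phase differences, applies the Neyman--Fisher factorization to identify $T_1=\sum_n (i_{N,n}-i_{N,m_n})\Delta\bar\theta_{n,m_n}$ and $T_2=\sum_n (j_{N,n}-j_{N,m_n})\Delta\bar\theta_{n,m_n}$ as a sufficient statistic, and then invokes unbiasedness to fix the normalization $Q_1=\sum_n(i_{N,n}-i_{N,m_n})^2=\tfrac{1}{6}N(N-1)$. You instead treat (\ref{E20}) directly as a linear Gaussian regression, so that ML reduces to least squares, and you solve the $2\times 2$ normal equations by showing $\mathbf{X}^T\mathbf{X}=\tfrac{N(N-1)}{6}\,\mathbf{I}_2$. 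The substantive difference is that your route forces you to verify the cross-term cancellation $\sum_{n=1}^{N/2}(i_{N,n}-i_{N,m_n})(j_{N,n}-j_{N,m_n})=0$, which you do correctly via the identities $i_{N,m_n}=\sqrt{N}-1-i_{N,n}$ and $j_{N,m_n}=\sqrt{N}-1-j_{N,n}$ and the factorization of the sum over the full index grid. The paper never checks this; yet without it the estimator $-T_1/Q_1$ would be biased by a term proportional to $\bar\theta_{\text{y-B2U}}$, so its ``unbiasedness'' step silently relies on exactly the orthogonality you prove. In that sense your argument is the more complete one, while the paper's sufficiency argument is shorter but leaves the decoupling of the two angles unjustified. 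Your diagonal-entry computation $\sum_{n=1}^{N/2}(2i_{N,n}-(\sqrt{N}-1))^2=\tfrac{N(N-1)}{6}$ also matches the paper's stated value of $Q_1$, so the two derivations agree numerically throughout.
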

\begin{proof}
See Appendix \ref{A2}.
\end{proof}

\begin{corollary} \label{c1}
 The  effective angles ${\bar\theta}_{\text {x-B2U}}$ and ${\bar\theta}_{\text {y-B2U}}$ can be decomposed as
 \begin{align}
\bar\theta_{\text{x-B2U}}=\hat{\bar\theta}_{\text{x-B2U}}+\epsilon_{\text{x-B2U}},\\ \bar\theta_{\text{y-B2U}}=\hat{\bar\theta}_{\text{y-B2U}} +\epsilon_{\text{y-B2U}},
\end{align}
where  $\epsilon_{\text{x-B2U}}$ and $\epsilon_{\text{y-B2U}}$ are estimation errors, which follow Guassion distribution $\mathcal{N}\left(0,\sigma_{\text{est}}^2\right)$ with the variance
\begin{align} \label{delta_est}
\sigma_{\text{est}}^2=\frac{6\sigma_{\text{pd}}^2 }{N\left(N-1\right)}.
\end{align}
\end{corollary}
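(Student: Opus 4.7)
The plan is to obtain the decomposition as a direct corollary of Theorem~\ref{t1} combined with Eq.~(\ref{E20}). First I would substitute the noise model
\begin{align*}
\Delta\bar\theta_{n,m_n} = -(i_{N,n}-i_{N,m_n})\bar\theta_{\text{x-B2U}} - (j_{N,n}-j_{N,m_n})\bar\theta_{\text{y-B2U}} + \Delta e_{n,m_n}
\end{align*}
into the closed-form ML estimator for $\hat{\bar\theta}_{\text{x-B2U}}$ stated in Theorem~\ref{t1}. This splits the estimator into three pieces: a deterministic multiple of $\bar\theta_{\text{x-B2U}}$ with coefficient $\tfrac{6}{N(N-1)}\sum_n (i_{N,n}-i_{N,m_n})^2$, a deterministic multiple of $\bar\theta_{\text{y-B2U}}$ with coefficient $\tfrac{6}{N(N-1)}\sum_n (i_{N,n}-i_{N,m_n})(j_{N,n}-j_{N,m_n})$, and a noise term linear in the $\Delta e_{n,m_n}$.

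Next I would invoke the two combinatorial identities
\begin{align*}
\sum_{n=1}^{N/2}(i_{N,n}-i_{N,m_n})^2 = \frac{N(N-1)}{6}, \qquad \sum_{n=1}^{N/2}(i_{N,n}-i_{N,m_n})(j_{N,n}-j_{N,m_n}) = 0,
\end{align*}
which are already embedded in the derivation of the ML estimator in Theorem~\ref{t1} (they are precisely what makes the estimator unbiased). Using them, the $\bar\theta_{\text{y-B2U}}$ contribution vanishes and the $\bar\theta_{\text{x-B2U}}$ coefficient collapses to one, giving $\hat{\bar\theta}_{\text{x-B2U}} = \bar\theta_{\text{x-B2U}} + \epsilon_{\text{x-B2U}}$ with
\begin{align*}
\epsilon_{\text{x-B2U}} = -\frac{6}{N(N-1)}\sum_{n=1}^{N/2}(i_{N,n}-i_{N,m_n})\,\Delta e_{n,m_n}.
\end{align*}
The argument for $\bar\theta_{\text{y-B2U}}$ is identical after swapping the roles of $i_{N,\cdot}$ and $j_{N,\cdot}$.

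For the distribution, I would note that the pairing $(n, m_n)=(n, N-n+1)$ for $n=1,\ldots,N/2$ uses disjoint antenna indices, so the $\Delta e_{n,m_n}$ are mutually independent $\mathcal{CN}(0,\sigma_{\text{pd}}^2)$ variates by Proposition~\ref{p1} and the line after Eq.~(\ref{E20}). Therefore $\epsilon_{\text{x-B2U}}$ is a zero-mean Gaussian, and
\begin{align*}
\mathrm{Var}(\epsilon_{\text{x-B2U}}) = \frac{36\,\sigma_{\text{pd}}^2}{N^2(N-1)^2}\sum_{n=1}^{N/2}(i_{N,n}-i_{N,m_n})^2 = \frac{6\,\sigma_{\text{pd}}^2}{N(N-1)},
\end{align*}
which matches Eq.~(\ref{delta_est}).

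The only mildly non-trivial step is verifying the two combinatorial identities under the pairing $m_n=N-n+1$. These rely on the symmetry of the URA indices $(i_{N,n},j_{N,n})$ about the array center and should already have been computed in the proof of Theorem~\ref{t1}. Once those two sums are in hand, the corollary is merely a matter of reading off the mean and variance of a linear combination of independent Gaussians, so the work reduces to repackaging the ML calculation in the form stated.
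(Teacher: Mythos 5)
Your proposal is correct and follows essentially the same route as the paper, whose own proof is just a one-line appeal to Theorem \ref{t1} together with the distribution of $\Delta\bar\theta_{n,m_n}$; you simply make explicit the two combinatorial identities (the sum of squares equalling $N(N-1)/6$, already computed as $Q_1$ in Appendix B, and the vanishing cross term that guarantees unbiasedness) and the independence of the $\Delta e_{n,m_n}$ across disjoint antenna pairs. The only cosmetic difference is a sign convention on $\epsilon_{\text{x-B2U}}$, which is immaterial since the error is zero-mean Gaussian.
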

\begin{proof}
Starting from Theorem \ref{t1},
and noticing that
\begin{align}
\Delta \bar\theta_{n,m_n} \sim \mathcal{CN}\left(-\left( i_{N,n}-i_{N,m_n}\right)\bar\theta_{\text {x-B2U}}-\left( j_{N,n}-j_{N,m_n}\right)  \bar\theta_{\text {y-B2U}}, \sigma_{\text{pd}}^2 \right),
\end{align}
we can obtain the desired result.
\end{proof}

\begin{remark}
Corollary \ref{c2} shows that the variance of the estimation error drops approximately inversely with the square of the  antenna number, which indicates the great benefit of applying a large number of BS antennas in angle estimation.
\end{remark}

With a LOS path between the BS and the user, their distance  $d_{\text{B2U}}$ can be obtained with high accuracy, by measuring time difference of arrival, i.e., wave propagation time from the user to the BS \cite{simic2011positioning}.

After obtaining the effective angles and the distance, we can estimate the location of the user
as $(\hat{x}_{\text{U}},\hat{y}_{\text{U}},\hat{z}_{\text{U}})$ with
\begin{align}
&\hat{x}_{\text{U}}=-\frac{ d_{\text{B2U}} \hat{\bar \theta}_{\text{x-B2U}} }{\pi},\\
&\hat{y}_{\text{U}}=-\frac{d_{\text{B2U}} \hat{\bar \theta}_{\text{y-B2U}} }{\pi},\\
&\hat{z}_{\text{U}}=-\frac{ d_{\text{B2U}}  \sqrt{ \pi^2- \hat{\bar \theta}_{\text{x-B2U}}^2- \hat{\bar \theta}_{\text{y-B2U}}^2   }  }{\pi},
\end{align}
where we assume the BS is located at $(0,0,0)$.

With the estimated user location  $(\hat{x}_{\text{U}},\hat{y}_{\text{U}},\hat{z}_{\text{U}})$  and the perfectly known IRS location  $(x_{\text{I}},y_{\text{I}},z_{\text{I}})$, the BS can calculate the  effective angles from the IRS to the user as follows:

\begin{lemma}\label{L1}
The  effective angles from the IRS to the user are
\begin{align}
\hat{\bar \theta}_{\text{x-I2U}}=\frac{\left({x}_{\text{I}}-\hat x_{\text{U}} \right)\pi}{\hat{d}_{\text{I2U}}},\\
\hat{\bar \theta}_{\text{y-I2U}}=\frac{\left({y}_{\text{I}}-\hat y_{\text{U}} \right)\pi}{\hat{d}_{\text{I2U}}},
\end{align}
where
\begin{align}
\hat{d}_{\text{I2U}}=\sqrt{{\left(\hat{x}_{\text{U}}-x_{\text{I}} \right)}^2+{\left(\hat{y}_{\text{U}}-y_{\text{I}} \right)}^2+{\left(\hat{z}_{\text{U}}-z_{\text{I}} \right)}^2  }.
\end{align}

\end{lemma}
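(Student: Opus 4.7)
The plan is purely geometric, exploiting the fact that by the time we reach the lemma all of the user's estimated coordinates $(\hat{x}_\text{U},\hat{y}_\text{U},\hat{z}_\text{U})$ and the IRS's (known) coordinates $(x_\text{I},y_\text{I},z_\text{I})$ are available. First, by analogy with the definitions of $\bar\theta_{\text{x-B2I}}$ and $\bar\theta_{\text{y-B2I}}$ in the excerpt, combined with the assumption $d_{\text{IRS}}=\lambda/2$, I would begin by writing the IRS-to-user effective AODs in spherical form,
\begin{align*}
\bar\theta_{\text{x-I2U}} = -\pi\cos\theta_{\text{I2U}}\cos\phi_{\text{I2U}},\qquad
\bar\theta_{\text{y-I2U}} = -\pi\cos\theta_{\text{I2U}}\sin\phi_{\text{I2U}},
\end{align*}
where $\theta_{\text{I2U}}$ and $\phi_{\text{I2U}}$ are the elevation and azimuth angles, respectively, of the straight-line path from the IRS to the user in the same global coordinate frame in which the BS, the user and the IRS have been positioned.

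Second, I would express the unit vector pointing from the IRS to the estimated user position in two ways. In spherical form it equals $(\cos\theta_{\text{I2U}}\cos\phi_{\text{I2U}},\cos\theta_{\text{I2U}}\sin\phi_{\text{I2U}},\sin\theta_{\text{I2U}})$, while in Cartesian form it equals $((\hat{x}_\text{U}-x_\text{I})/\hat{d}_{\text{I2U}},(\hat{y}_\text{U}-y_\text{I})/\hat{d}_{\text{I2U}},(\hat{z}_\text{U}-z_\text{I})/\hat{d}_{\text{I2U}})$, with $\hat{d}_{\text{I2U}}$ the stated Euclidean distance. Equating the first two coordinates of these two representations identifies the trigonometric products $\cos\theta_{\text{I2U}}\cos\phi_{\text{I2U}}$ and $\cos\theta_{\text{I2U}}\sin\phi_{\text{I2U}}$ with the corresponding normalized Cartesian displacements, and substituting into the display above yields $\hat{\bar\theta}_{\text{x-I2U}}=(x_\text{I}-\hat{x}_\text{U})\pi/\hat{d}_{\text{I2U}}$ and likewise for the $y$-component, which is exactly the claim of the lemma.

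The main obstacle, such as it is, is purely bookkeeping of signs and conventions: the minus sign in the original definition of $\bar\theta_{\text{x-B2I}}$ flips the displacement $(\hat{x}_\text{U}-x_\text{I})\to(x_\text{I}-\hat{x}_\text{U})$, and one must verify that the spherical-Cartesian correspondence used for the IRS-to-user direction is the same one implicitly adopted for the BS-to-IRS direction (otherwise the wrong sign or the wrong pair of coordinates could appear). Apart from this, no probabilistic or estimation-theoretic argument is needed, since $\hat{x}_\text{U},\hat{y}_\text{U},\hat{z}_\text{U}$ and $\hat{d}_{\text{I2U}}$ are deterministic functions of quantities already obtained in the preceding developments, and the lemma is simply the explicit geometric map from an estimated user location to the corresponding IRS-side effective angles.
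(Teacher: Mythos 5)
Your proposal is correct and is essentially the argument the paper has in mind: the paper's proof of Lemma \ref{L1} consists of the single sentence that it ``is provable using elementary geometry,'' and your direction-cosine computation (equating the spherical and Cartesian forms of the unit vector from the IRS to the estimated user position, then applying the $-\pi\cos\theta\cos\phi$ convention with $d_{\text{IRS}}=\lambda/2$) is exactly that elementary geometry, spelled out. Your sign bookkeeping is also consistent with the paper's own convention, as one can check against the inverse relation $\hat{x}_{\text{U}}=-d_{\text{B2U}}\hat{\bar\theta}_{\text{x-B2U}}/\pi$ used for the BS--user link with the BS at the origin.
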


\begin{proof}
Lemma \ref{L1} is provable using elementary geometry.
\end{proof}

\begin{theorem} \label{t2}
 The  effective angles from the IRS to the user can be decomposed into
\begin{align}
&\bar\theta_{\text{x-I2U}}= \hat{\bar\theta}_{\text{x-I2U}}+ \varphi_{1}\epsilon_{\text{x-B2U}}+\varphi_{{2}}\epsilon_{\text{y-B2U}},\label{E3}\\
&\bar\theta_{\text{y-I2U}}=\hat{\bar\theta}_{\text{y-I2U}}+
\varphi_{2}\epsilon_{\text{x-B2U}}+\varphi_{3}\epsilon_{\text{y-B2U}},\label{E4}
\end{align}
with
\begin{align}
&\varphi_{1}= Ra  \left\{ 1-\frac{\hat{\bar \theta}_\text{x-I2U}^{2}}{\pi^2} +\frac{\hat{\bar \theta}_\text{x-I2U}^{2} \hat{\bar \theta}_\text{z-I2U}  }{\pi^3} \right\},\\
&\varphi_{2}= Ra   \left\{- \frac{\hat{\bar \theta}_{\text{x-I2U}} \hat{\bar \theta}_\text{y-I2U} }{\pi^2 }
+\frac{\hat{\bar \theta}_{\text{x-I2U}} \hat{\bar \theta}_\text{y-I2U}\hat{\bar \theta}_{\text{z-I2U}} }{\pi^3 }
\right\},\\
&\varphi_{3}= Ra \left\{ 1- \frac{\hat{\bar \theta}_\text{y-I2U}^{2}}{\pi^2}+ \frac{\hat{\bar \theta}_\text{y-I2U}^{2} \hat{\bar \theta}_\text{z-I2U}  }{\pi^3} \right\},
\end{align}
where $Ra \triangleq\frac{d_{\text{B2U}}}{\hat d_{\text{I2U}}}$
and $\hat{\bar \theta}_{\text{z-I2U}}\triangleq \frac{\left({z}_{\text{I}}-\hat z_{\text{U}} \right)\pi}{\hat{d}_{\text{I2U}}}$.
\end{theorem}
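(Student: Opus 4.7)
The plan is to observe that Lemma~\ref{L1} defines $\hat{\bar\theta}_{\text{x-I2U}}$ and $\hat{\bar\theta}_{\text{y-I2U}}$ as deterministic functions of the estimated BS-to-user angles, through the chain: BS-to-user angles $\to$ estimated user coordinates $\to$ estimated BS-to-IRS-to-user displacement $\to$ IRS-to-user angles. The very same composition, evaluated at the \emph{true} BS-to-user angles $\bar\theta_{\text{x-B2U}},\bar\theta_{\text{y-B2U}}$, produces the true IRS-to-user angles $\bar\theta_{\text{x-I2U}},\bar\theta_{\text{y-I2U}}$. Because Corollary~\ref{c1} gives $\bar\theta_{\text{x-B2U}}=\hat{\bar\theta}_{\text{x-B2U}}+\epsilon_{\text{x-B2U}}$ and $\bar\theta_{\text{y-B2U}}=\hat{\bar\theta}_{\text{y-B2U}}+\epsilon_{\text{y-B2U}}$ with errors whose variance shrinks like $1/N^2$, a first-order Taylor expansion around the estimates is justified and will directly yield (\ref{E3})-(\ref{E4}), with $\varphi_1,\varphi_2,\varphi_3$ being the partial derivatives of this composition.

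In detail, I would first introduce the scalar maps $F_x,F_y:(\bar\theta_{\text{x-B2U}},\bar\theta_{\text{y-B2U}})\mapsto \bar\theta_{\text{x-I2U}},\bar\theta_{\text{y-I2U}}$ built from (i) the user-localization formulas $x_{\text{U}}=-d_{\text{B2U}}\bar\theta_{\text{x-B2U}}/\pi$, $y_{\text{U}}=-d_{\text{B2U}}\bar\theta_{\text{y-B2U}}/\pi$, $z_{\text{U}}=-d_{\text{B2U}}\sqrt{\pi^{2}-\bar\theta_{\text{x-B2U}}^{2}-\bar\theta_{\text{y-B2U}}^{2}}/\pi$ (which follow from the preceding display), (ii) the distance formula $d_{\text{I2U}}=\sqrt{(x_{\text{I}}-x_{\text{U}})^{2}+(y_{\text{I}}-y_{\text{U}})^{2}+(z_{\text{I}}-z_{\text{U}})^{2}}$, and (iii) the angle formulas of Lemma~\ref{L1}. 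Evaluating at the estimates reproduces $\hat{\bar\theta}_{\text{x-I2U}}$ and $\hat{\bar\theta}_{\text{y-I2U}}$, so matching zeroth-order terms gives the $\hat{\bar\theta}_{\text{x-I2U}}$, $\hat{\bar\theta}_{\text{y-I2U}}$ parts of (\ref{E3})-(\ref{E4}), while matching the coefficients of $\epsilon_{\text{x-B2U}}$ and $\epsilon_{\text{y-B2U}}$ identifies $\varphi_{1}=\partial F_{x}/\partial \bar\theta_{\text{x-B2U}}$, $\varphi_{2}=\partial F_{x}/\partial \bar\theta_{\text{y-B2U}}=\partial F_{y}/\partial \bar\theta_{\text{x-B2U}}$, and $\varphi_{3}=\partial F_{y}/\partial \bar\theta_{\text{y-B2U}}$, all evaluated at $(\hat{\bar\theta}_{\text{x-B2U}},\hat{\bar\theta}_{\text{y-B2U}})$.

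The computation of these partials proceeds by the chain rule. The elementary pieces are $\partial x_{\text{U}}/\partial\bar\theta_{\text{x-B2U}}=-d_{\text{B2U}}/\pi$, $\partial y_{\text{U}}/\partial\bar\theta_{\text{y-B2U}}=-d_{\text{B2U}}/\pi$, and, the only nontrivial one, $\partial z_{\text{U}}/\partial\bar\theta_{\text{x-B2U}}=d_{\text{B2U}}\hat{\bar\theta}_{\text{x-B2U}}/\bigl(\pi\sqrt{\pi^{2}-\hat{\bar\theta}_{\text{x-B2U}}^{2}-\hat{\bar\theta}_{\text{y-B2U}}^{2}}\bigr)$ (and its $y$-analogue), together with $\partial d_{\text{I2U}}/\partial x_{\text{U}}=-(x_{\text{I}}-\hat x_{\text{U}})/\hat d_{\text{I2U}}$ and its $y,z$-analogues. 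After assembling these via the quotient rule applied to $F_{x}=\pi(x_{\text{I}}-x_{\text{U}})/d_{\text{I2U}}$ and $F_{y}=\pi(y_{\text{I}}-y_{\text{U}})/d_{\text{I2U}}$, one rewrites the answer using the identifications $\hat{\bar\theta}_{\text{x-I2U}}=\pi(x_{\text{I}}-\hat x_{\text{U}})/\hat d_{\text{I2U}}$, $\hat{\bar\theta}_{\text{y-I2U}}=\pi(y_{\text{I}}-\hat y_{\text{U}})/\hat d_{\text{I2U}}$, and $\hat{\bar\theta}_{\text{z-I2U}}=\pi(z_{\text{I}}-\hat z_{\text{U}})/\hat d_{\text{I2U}}$, along with $Ra=d_{\text{B2U}}/\hat d_{\text{I2U}}$; the stated cubic-in-$1/\pi$ expressions for $\varphi_{1},\varphi_{2},\varphi_{3}$ emerge, and the symmetry of the cross coefficient follows because both $\partial F_{x}/\partial\bar\theta_{\text{y-B2U}}$ and $\partial F_{y}/\partial\bar\theta_{\text{x-B2U}}$ reduce, through the chain rule, to the same combination involving $(x_{\text{I}}-\hat x_{\text{U}})(y_{\text{I}}-\hat y_{\text{U}})$ and the common $z$-pathway.

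The main obstacle is bookkeeping rather than conceptual: the appearance of $\hat{\bar\theta}_{\text{z-I2U}}$ in every $\varphi_{i}$ comes exclusively from the $z_{\text{U}}$-branch of the chain rule, and its coupling to $\hat{\bar\theta}_{\text{x-I2U}}$ and $\hat{\bar\theta}_{\text{y-I2U}}$ through the ratios $\hat{\bar\theta}_{\text{x-B2U}}/\sqrt{\pi^{2}-\hat{\bar\theta}_{\text{x-B2U}}^{2}-\hat{\bar\theta}_{\text{y-B2U}}^{2}}$ must be carefully simplified so that a single power of $\pi$ is absorbed per multiplication. A miscounted factor of $\pi$ here is where a careless calculation would fail; apart from that the remainder of the Taylor expansion is negligible in probability by Corollary~\ref{c1}, so no further justification is needed.
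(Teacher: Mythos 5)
Your overall strategy coincides with the paper's: both linearize the composite map from the BS-to-user effective angles, through the user coordinates and the distance $d_{\text{I2U}}$, to the IRS-to-user effective angles, to first order in $(\epsilon_{\text{x-B2U}},\epsilon_{\text{y-B2U}})$. The paper merely organizes the bookkeeping differently, starting from the exact identity $\bar\theta_{\text{x-I2U}}=\frac{\hat d_{\text{I2U}}}{d_{\text{I2U}}}\hat{\bar\theta}_{\text{x-I2U}}+\frac{d_{\text{B2U}}}{d_{\text{I2U}}}\epsilon_{\text{x-B2U}}$ and expanding $\hat d_{\text{I2U}}/d_{\text{I2U}}=(1+Q_2)^{-1/2}\approx 1-\tfrac{1}{2}Q_2$, where $Q_2$ collects the first-order perturbation of $d_{\text{I2U}}^2$; this is your chain rule in disguise.

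The gap is in your last step, where you assert that the stated $\varphi_i$ ``emerge'' and that the cross coefficient is automatically symmetric. If you carry the chain rule out exactly, the $z$-branch contributes through $\partial z_{\text{U}}/\partial\bar\theta_{\text{x-B2U}}=d_{\text{B2U}}\hat{\bar\theta}_{\text{x-B2U}}/\bigl(\pi\sqrt{\pi^{2}-\hat{\bar\theta}_{\text{x-B2U}}^{2}-\hat{\bar\theta}_{\text{y-B2U}}^{2}}\bigr)=d_{\text{B2U}}\hat x_{\text{U}}/(\pi\hat z_{\text{U}})$, so the resulting $z$-pathway term in $\varphi_1$ is $Ra\,\hat{\bar\theta}_{\text{x-I2U}}\hat{\bar\theta}_{\text{z-I2U}}\,\hat x_{\text{U}}/(\pi^{2}\hat z_{\text{U}})$ rather than $Ra\,\hat{\bar\theta}_{\text{x-I2U}}^{2}\hat{\bar\theta}_{\text{z-I2U}}/\pi^{3}$; the two agree only if $\hat x_{\text{U}}/\hat z_{\text{U}}=(x_{\text{I}}-\hat x_{\text{U}})/\hat d_{\text{I2U}}$, i.e., only when the user's direction cosines as seen from the BS coincide with those as seen from the IRS. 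For the same reason the exact Jacobian is not symmetric: $\partial F_{x}/\partial\bar\theta_{\text{y-B2U}}=\partial F_{y}/\partial\bar\theta_{\text{x-B2U}}$ requires $x_{\text{I}}\hat y_{\text{U}}=y_{\text{I}}\hat x_{\text{U}}$, which is a geometric coincidence, not an identity, so your symmetry argument does not go through as stated. The paper's own proof makes the same silent substitution when it rewrites $Q_2$ in terms of $\hat{\bar\theta}_{\text{x-I2U}},\hat{\bar\theta}_{\text{y-I2U}}$ rather than the B2U angles; to land on the theorem's $\varphi_i$ you must add and justify that identification (e.g., by invoking the assumption, used elsewhere in the paper, that the IRS and the user lie in similar directions from the BS). Without it, your computation is internally consistent but yields coefficients different from those in the statement.
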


\begin{proof}
See Appendix \ref{A3}.
\end{proof}

\begin{remark}
Theorem \ref{t2} implies that in addition to the estimation error of $\hat{\bar \theta}_{\text{x-B2U}}$ and $\hat{\bar \theta}_{\text{y-B2U}}$, the accuracy of $\hat{\bar \theta}_{\text{x-I2U}}$ and $\hat{\bar \theta}_{\text{y-I2U}}$ is also significantly influenced by the distances of the IRS and the BS to the user. Specifically, the estimation error increases linearly with the ratio $Ra \triangleq\frac{d_{\text{B2U}}}{\hat d_{\text{I2U}}}$, implying that it is best to place the IRS close to the BS in terms of angle estimation accuracy. \footnote{ It is worth noting that, assuming perfect CSI, it is desirable to place the IRS close to the BS or the user \cite{qingqing2019towards,wu2019intelligent}. However, in this paper, imperfect CSI is considered, as such placing the IRS close to the user would not necessarily improve the achievable due to the increased angle estimation error.}
\end{remark}

\section{Joint optimization of BS beamforming and IRS beamforming }\label{s3}
Using the estimated angles, in this section we aim to maximize the average received signal power:
\begin{align}
P_{\text{r}} =\mathbb{E} \left\{ {\left|\left( {\bf h}_{\text{B2U}}^T  + {\bf h}_{\text{I2U}}^T {\bm \Theta} {\bf H}_{\text{B2I}} \right)  {\bf w}\right|}^2 \right\},
\end{align}
by jointly optimizing  BS beamforming  and IRS beamforming.

We first give the expression of  the average received signal power.
\begin{proposition} \label{p2}
The average received signal power is  given by
\begin{align}
P_{\text{r}}={\bf w}^H {\bf T} {\bf w},
\end{align}
with
\begin{align}
&{\bf T} \triangleq {\beta}_{\text{B2I2U}}
{\left( {\bm \Theta} \bar{\bf H}_{\text{B2I}}\right)}^H {\bf B}{\bm \Theta} \bar{\bf H}_{\text{B2I}}
+ \sqrt{{\beta}_{\text{B2I2U}} {\beta}_{\text{B2U}}} \left( {\left( {\bm \Theta} \bar{\bf H}_{\text{B2I}}\right)}^H {\bf C}+ {\bf C}^H  {\bm \Theta} \bar{\bf H}_{\text{B2I}} \right) \\
&+ {\beta}_{\text{B2U}} {\bf A}+\sigma_{\text{NLOS}}^2 {\bf I}_{N},\nonumber
\end{align}
where
\begin{align}
&\bar{\bf H}_\text{B2I}\triangleq {\bf b} \left(\bar\theta_{\text{x-B2Ia}},\bar\theta_{\text{y-B2Ia}} \right)
{\bf a}^T \left(\bar \theta_{\text{x-B2I}},\bar\theta_{\text{y-B2I}} \right)\label{d1}\\
&{\beta}_{\text{B2I2U}} \triangleq \frac{\alpha_{\text{I2U}} \alpha_{\text{B2I}} v_{\text{B2I}} v_{\text{I2U}}}{\left(v_{\text{B2I}}+1\right) \left(v_{\text{I2U}}+1\right)}, \label{d2}\\
&{\beta}_{\text{B2U}} \triangleq \frac{v_{\text{B2U}} \alpha_{\text{B2U}}   }{v_{\text{B2U}}+1},
\label{d3}\\
&\sigma_{\text{NLOS}}^2  \triangleq M \frac{\alpha_{\text{I2U}} \alpha_{\text{B2I}} }{v_{\text{B2I}}+1}
\left( 1+\frac{v_{\text{B2I}}}{v_{\text{I2U}}+1}\right)
+\frac{\alpha_{\text{B2U}} }{v_{\text{B2U}}+1}.\label{d4}
\end{align}
The elements of ${\bf A} \in \mathbb{C}^{N \times N}$, ${\bf B}\in \mathbb{C}^{M \times M}$ and ${\bf C} \in \mathbb{C}^{M \times N}$ are given by
\begin{align}
&{\left[{\bf A}\right]}_{mn}={\left[{\bf \hat A}\right]}_{mn}
\exp \left(- \frac{1}{2}\sigma_{\text{est}}^2\left\{ i_{N,mn}^2 +j_{N,mn} ^2\right\} \right), \\
&{\left[{\bf B}\right]}_{mn}={\left[{\bf \hat B}\right]}_{mn} \exp \left(- \frac{1}{2}\sigma_{\text{est}}^2\left\{ {\left(i_{M,mn}\varphi_1+ j_{M,mn}\varphi_{2} \right)}^2 +{\left(j_{M,mn}\varphi_{2}+ j_{M,mn}\varphi_3 \right) }^2\right\} \right),\label{E24}\\
&{\left[{\bf C}\right]}_{mn}
\!=\!{\left[{\bf \hat C}\right]}_{mn}
\exp \!\left(\!-\!\frac{1}{2}\sigma_{\text{est}}^2\left\{ {\left(i_{M,m} \varphi_1\!+\!j_{M,m} \varphi_{2} \!-\!i_{N,n} \right)}^2
\!+\!{\left(i_{M,m} \varphi_{2} \!+\!j_{M,m} \varphi_3\! -\!j_{N,n}\right)}^2\right\} \!\right)\label{E25},
\end{align}
 where $i_{M,mn} \triangleq \left(i_{M,n}-i_{M,m}\right)$, $j_{M,mn} \triangleq \left(j_{M,n}-j_{M,m}\right)$, $ {\bf \hat A} \triangleq   {\bf a}^*\left(\hat{\bar\theta}_{\text{x-B2U}},\hat{\bar\theta}_{\text{y-B2U}} \right)  {\bf a}^T \left(\hat{\bar\theta}_{\text{x-B2U}},\hat{\bar\theta}_{\text{y-B2U}} \right) $,
$ {\bf \hat B} \triangleq  {\bf b}^*\left(\hat{\bar\theta}_{\text{x-I2U}},\hat{\bar\theta}_{\text{y-I2U}} \right) {\bf b}^T\left(\hat{\bar\theta}_{\text{x-I2U}},\hat{\bar\theta}_{\text{y-I2U}} \right)  $ and
${\bf \hat C} \triangleq   {\bf b}^*\left(\hat{\bar\theta}_{\text{x-I2U}},\hat{\bar\theta}_{\text{y-I2U}} \right)    {\bf a}^T \left(\hat{\bar\theta}_{\text{x-B2U}},\hat{\bar\theta}_{\text{y-B2U}} \right)   $.
\end{proposition}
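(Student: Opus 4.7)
The plan is to substitute the Rician decompositions of ${\bf h}_{\text{B2U}}$, ${\bf h}_{\text{I2U}}$, and ${\bf H}_{\text{B2I}}$ into the composite effective row channel ${\bf h}_{\text{eff}}^T \triangleq {\bf h}_{\text{B2U}}^T + {\bf h}_{\text{I2U}}^T {\bm \Theta} {\bf H}_{\text{B2I}}$, so that it splits into six identifiable LOS/NLOS sub-rows, write $P_{\text{r}} = {\bf w}^H \mathbb{E}[{\bf h}_{\text{eff}}^* {\bf h}_{\text{eff}}^T] {\bf w}$, and evaluate ${\bf T} = \mathbb{E}[{\bf h}_{\text{eff}}^* {\bf h}_{\text{eff}}^T]$ in two stages: first average over the NLOS fading arrays, then over the angle estimation errors.

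In the first stage, because $\widetilde{\bf h}_{\text{B2U}}$, $\widetilde{\bf h}_{\text{I2U}}$, and $\widetilde{\bf H}_{\text{B2I}}$ are mutually independent, zero mean, and independent of the angle errors, every cross term with exactly one NLOS factor has zero expectation. The surviving NLOS-bearing terms are the matched-NLOS self-products coming from NLOS$_{\text{B2U}}$, from NLOS$_{\text{I2U}} \cdot {\bm \Theta} \cdot \text{LOS}_{\text{B2I}}$, from LOS$_{\text{I2U}} \cdot {\bm \Theta} \cdot \text{NLOS}_{\text{B2I}}$, and from NLOS$_{\text{I2U}} \cdot {\bm \Theta} \cdot \text{NLOS}_{\text{B2I}}$. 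Each is reduced to an isotropic multiple of ${\bf I}_N$ by repeatedly using $\mathbb{E}[\widetilde{\bf h}\,\widetilde{\bf h}^H] = {\bf I}$, ${\bm \Theta}^H {\bm \Theta} = {\bf I}_M$, and the standard identity $\mathbb{E}[\widetilde{\bf H}_{\text{B2I}}^H {\bf M} \widetilde{\bf H}_{\text{B2I}}] = \text{tr}({\bf M}) {\bf I}_N$ valid for any deterministic ${\bf M}$; combining with the Rician and large-scale factors, the four isotropic pieces sum to exactly $\sigma_{\text{NLOS}}^2 {\bf I}_N$ as defined in \eqref{d4}.

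In the second stage, the pure-LOS contributions are quadratic in steering vectors evaluated at the \emph{true} angles. Using Corollary~\ref{c1}, each entry of $\mathbb{E}[{\bf a}^*(\bar\theta_{\text{x-B2U}},\bar\theta_{\text{y-B2U}})\, {\bf a}^T(\bar\theta_{\text{x-B2U}},\bar\theta_{\text{y-B2U}})]$ becomes $[\hat{\bf A}]_{mn}\, \mathbb{E}[e^{j(i_{N,mn}\epsilon_{\text{x-B2U}} + j_{N,mn}\epsilon_{\text{y-B2U}})}]$, and the Gaussian characteristic-function identity $\mathbb{E}[e^{jtZ}] = e^{-t^2 \sigma^2 / 2}$ applied to the Gaussian exponent (whose variance is $(i_{N,mn}^2 + j_{N,mn}^2)\sigma_{\text{est}}^2$) produces the ${\bf A}$ formula. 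The ${\bf B}$ formula follows in the same way, except that Theorem~\ref{t2} is invoked to write each I2U angle as a linear combination of $(\epsilon_{\text{x-B2U}}, \epsilon_{\text{y-B2U}})$ with coefficients $\varphi_1, \varphi_2, \varphi_3$; the variance of the resulting bivariate Gaussian exponent yields the doubled squared-coefficient structure in \eqref{E24}. Multiplying each LOS--LOS second moment by its Rician scaling assembles the $\beta_{\text{B2I2U}}({\bm \Theta} \bar{\bf H}_{\text{B2I}})^H {\bf B} ({\bm \Theta} \bar{\bf H}_{\text{B2I}})$ and $\beta_{\text{B2U}} {\bf A}$ pieces of ${\bf T}$.

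The main obstacle is the cross matrix ${\bf C}$, arising from the two LOS--LOS cross expectations $\mathbb{E}[{\bf b}^*(\bar\theta_{\text{x-I2U}},\bar\theta_{\text{y-I2U}})\, {\bf a}^T(\bar\theta_{\text{x-B2U}},\bar\theta_{\text{y-B2U}})]$ (and its Hermitian conjugate) that couple the B2I2U path with the B2U path. Because the I2U angles share the \emph{same} underlying errors $(\epsilon_{\text{x-B2U}}, \epsilon_{\text{y-B2U}})$ that perturb the B2U angles, the two steering-vector exponents cannot be averaged independently; instead, one must first add the two exponents, collect the coefficients of $\epsilon_{\text{x-B2U}}$ and $\epsilon_{\text{y-B2U}}$ into $(i_{M,m}\varphi_1 + j_{M,m}\varphi_2 - i_{N,n})$ and $(i_{M,m}\varphi_2 + j_{M,m}\varphi_3 - j_{N,n})$ respectively, and then apply the characteristic function to the resulting combined bivariate Gaussian to obtain the squared-bracket structure of \eqref{E25}. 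Finally, attaching the $\sqrt{\beta_{\text{B2U}} \beta_{\text{B2I2U}}}$ prefactor from the Rician decomposition and summing the four matrix pieces yields the claimed quadratic form ${\bf w}^H {\bf T} {\bf w}$.
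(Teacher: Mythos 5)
Your proposal follows essentially the same route as the paper's proof in Appendix D: decompose the effective channel into LOS and NLOS parts, show the NLOS contribution collapses to $\sigma_{\text{NLOS}}^2\,{\bf w}^H{\bf w}$, and evaluate the LOS second moments ${\bf A}$, ${\bf B}$, ${\bf C}$ entrywise via the Gaussian characteristic function applied to the angle-error expansions of Corollary~\ref{c1} and Theorem~\ref{t2}, including the correct handling of the shared errors $(\epsilon_{\text{x-B2U}},\epsilon_{\text{y-B2U}})$ in the cross matrix ${\bf C}$. The argument is correct at the same level of rigor as the paper's, so no further comparison is needed.
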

\begin{proof}
See Appendix \ref{A4}.
\end{proof}

Then the optimization problem can be formulated as:
\begin{align} {\label{E13}}
  \begin{array}{ll}
    \max\limits_{\left\{ \bm{\Theta}, {\bf w} \right\}}
& P_{r}={\bf w}^H{\bf T}{\bf w},  \\
 \operatorname{s.t.} & \begin{array}[t]{lll}
              {\left\|\bf w \right\|}^2 \le P_{BS},\\
             \left|{\left[ \bf \Theta \right]}_{ii}\right| =1,i=1,...,M,
           \end{array}
  \end{array}
\end{align}
where $P_\text{BS}$ is the maximum transmit power at the BS.

Since  the phase shift matrix ${\bm \Theta}$ and the BS beam ${\bf w}$ are coupled, the above optimization problem (\ref{E13}) is difficult to
solve. Responding to this, we will first optimize ${\bf w}$ by fixing
${\bm \Theta}$ and then update ${\bm \Theta}$ by fixing ${\bf w}$ respectively. As such, we can
obtain sub-optimal solutions for both ${\bf w}$ and ${\bm \Theta}$ by performing
this successive refinement process iteratively.

\subsection{BS beamforming}\label{sb1}
 By fixing the phase shift matrix ${\bm \Theta}$, we can simplify (\ref{E13})  to
\begin{align}
  \begin{array}{ll}
    \max\limits_{\left\{  {\bf w} \right\}}
& P\left( {\bf w} \right) ={\bf w}^H {\bf T} {\bf w},  \\
 \operatorname{s.t.} & \begin{array}[t]{lll}
                {\left\|\bf w \right\|}^2  \le P_{\text{BS}}
           \end{array}.
  \end{array}
\end{align}

The above problem is a convex problem  equivalent to
 \begin{align}
  \begin{array}{ll}
    \min\limits_{\left\{   {\bf w} \right\}}
&L\left( {\bf w},\mu \right)= -{\bf w}^H {\bf T} {\bf w}+\mu \left( {\left|\bf w \right|}^2-P_{\text{BS}}\right),  \\
 \operatorname{s.t.} & \begin{array}[t]{lll}
              \mu>0,
           \end{array}
  \end{array}
\end{align}
  where $\mu>0$ is the Lagrangian multiplier.

The   Karush-Kuhn-Tucker (KKT) conditions are
\begin{align}
&\nabla_{{\bf w}} L\left( {\bf w},\mu \right)=-2{\bf T w}+2\mu {\bf w}=0,\\
&\nabla_{{\mu}} L\left( {\bf w},\mu \right)={\left|\bf w \right|}^2-P_{\text{BS}}=0,
\end{align}
based on which, we have
\begin{align} \label{E16}
{\bf w}=\sqrt{P_{\text{BS}}} {\bf t}_{\text{max}},
\end{align}
where ${\bf t}_{\text{max}}$ is the eigenvector of ${\bf T}$ corresponding to the largest eigenvalue.

\subsection{IRS beamforming}\label{sb2}
For a given ${\bf w}$, the optimization problem (\ref{E13}) can be rewritten as
\begin{align}
  \begin{array}{ll}
    \max\limits_{\left\{ \bm{\Theta} \right\}}
 P{\left(\bm{\Theta}  \right)}& \triangleq
\beta_{\text{B2I2U}} {\bf w}^H
\sqrt{\beta_{\text{B2I2U}}  \beta_{\text{B2U}} } {\left( {\bm \Theta} \bar{\bf H}_{\text{B2I}}\right)}^H {\bf B}{\bm \Theta} \bar{\bf H}_{\text{B2I}} {\bf w} \\
&+\sqrt{\beta_{\text{B2I2U}}  \beta_{\text{B2U}} } \left( {\bf w}^H {\left( {\bm \Theta} \bar{\bf H}_{\text{B2I}}\right)}^H {\bf C}{\bf w}+ {\bf w}^H{\bf C}^H  {\bm \Theta} \bar{\bf H}_{\text{B2I}}{\bf w}\right) \\
&+ {\beta}_{\text{B2U}} {\bf w}^H{\bf A}{\bf w}+\sigma_{\text{NLOS}}^2 {\bf w}^H {\bf w}\\
 \operatorname{s.t.} & \begin{array}[t]{lll}
             \left|{\left[ \bf \Theta \right]}_{ii}\right| =1,i=1,...,M.
           \end{array}
  \end{array}
\end{align}

According to the rule that ${\bf E}^H {\bf X}^H {\bf F} = {\bf x}^H \left( {\bf E}^* \odot {\bf F} \right)$ with ${\bf X}={\text {diag}} \left({\bf x} \right)$, the objective function can be rewritten as
\begin{align}
P{\left(\bm{\Theta}  \right)}
&=\beta_{\text{B2I2U}}
{\text{tr}}\left(   {\bf \Theta}^H {\bf B} {\bf \Theta}  \bar{\bf H}_{\text{B2I}} {\bf w} {\bf w}^H
  \bar{\bf H}_{\text{B2I}}^H \right) \\
 &+ \sqrt{\beta_{\text{B2I2U}}  \beta_{\text{B2U}} } {\bm \xi}^H \left( \bar{\bf H}_{\text{B2I}}^*{\bf w}^* \odot {\bf C w} \right)
 + \sqrt{\beta_{\text{B2I2U}}  \beta_{\text{B2U}} }\left( \bar{\bf H}_{\text{B2I}}^T{\bf w}^T \odot {\bf  w}^H  {\bf C }^H \right)  {\bm \xi} \nonumber\\
 &+ {\beta}_{\text{B2U}} {\bf w}^H{\bf A}{\bf w}+\sigma_{\text{NLOS}}^2 {\bf w}^H {\bf w}.\nonumber
\end{align}

Using the property that  ${\bf X}^H {\bf E} {\bf Y} {\bf F} = {\bf x}^H \left( {\bf E} \odot {\bf F}^T \right) {\bf y}$ with ${\bf X}={\text {diag}} \left({\bf x} \right)$ and ${\bf Y}={\text {diag}} \left({\bf y} \right)$, the above equation can be further expressed as
\begin{align}
P{\left(\bm{\xi}  \right)}
&=\beta_{\text{B2I2U}} {\bm \xi}^H \left( {\bf B} \odot {\left(\bar{\bf H}_{\text{B2I}} {\bf w} {\bf w}^H
  \bar{\bf H}_{\text{B2I}}^H  \right)}^T \right) {\bm \xi}
    \\
 &+ \sqrt{\beta_{\text{B2I2U}}  \beta_{\text{B2U}} } {\bm \xi}^H \left( \bar{\bf H}_{\text{B2I}}^*{\bf w}^* \odot {\bf C w} \right)
 + \sqrt{\beta_{\text{B2I2U}}  \beta_{\text{B2U}} }\left( \bar{\bf H}_{\text{B2I}}^T{\bf w}^T \odot {\bf  w}^H  {\bf C }^H \right)  {\bm \xi}\nonumber\\
 &+ {\beta}_{\text{B2U}} {\bf w}^H{\bf A}{\bf w}+\sigma_{\text{NLOS}}^2 {\bf w}^H {\bf w},\nonumber
\end{align}
based on which, the optimization problem can be reformulated as
\begin{align} {\label{E14}}
  \begin{array}{ll}
    \max\limits_{\left\{ \bm{\xi} \right\}}
 P{\left(\bm{\xi}  \right)}
&=\beta_{\text{B2I2U}} {\bm \xi}^H \left( {\bf B} \odot {\left(\bar{\bf H}_{\text{B2I}} {\bf w} {\bf w}^H
  \bar{\bf H}_{\text{B2I}}^H  \right)}^T \right) {\bm \xi}
    \\
 &+ \sqrt{\beta_{\text{B2I2U}}  \beta_{\text{B2U}} } {\bm \xi}^H \left( \bar{\bf H}_{\text{B2I}}^*{\bf w}^* \odot {\bf C w} \right)
 + \sqrt{\beta_{\text{B2I2U}}  \beta_{\text{B2U}} }\left( \bar{\bf H}_{\text{B2I}}^T{\bf w}^T \odot {\bf  w}^H  {\bf C }^H \right)  {\bm \xi} \\
 &+ {\beta}_{\text{B2U}} {\bf w}^H{\bf A}{\bf w}+\sigma_{\text{NLOS}}^2 {\bf w}^H {\bf w} \\
 \operatorname{s.t.} & \begin{array}[t]{lll}
             \left|{ \xi }_{i}\right| =1,i=1,...,M.
           \end{array}
  \end{array}
\end{align}

Since $\left|{ \xi }_{i}\right| =1$, we have
${\text{tr}}\left({\bm \xi}{\bm \xi}^H \right)=M$. In order
to deal with the non-convex constraint of  $\left|{ \xi }_{i}\right| =1$, we relax
the problem (\ref{E14}) into the following optimization  problem  with a
convex $\ell_{\infty} $ constraint:
\begin{align} \label{E15}
  \begin{array}{lll}
    \max\limits_{\left\{ \bm{\xi} \right\}}
 &P{\left(\bm{\xi}  \right)}\\
 \operatorname{s.t.} & \begin{array}[t]{lll}
             {\text{tr}}\left({\bm \xi}{\bm \xi}^H \right)=M,\\
             {\left\| \bm \xi\right\|}_{\infty} \le 1.
           \end{array}
  \end{array}
\end{align}

Since the $\ell_{\infty}$ constraint is non-differentiable, we instead use
the $\ell_{p}$  with a  large $p$ to approximate $\ell_{\infty}$,
$\underset{p\to \infty}{\lim} \ell_{p}=\ell_{\infty}$. To solve
(\ref{E15}), we incorporate the second constraint
 by  exploiting the barrier method  with the logarithmic barrier function
$F\left(x \right)$ to approximate the penalty of violating the $\ell_{p}$ constraint,
$$ F(x)=\left\{
\begin{aligned}
-\frac{1}{\kappa} \ln \left( x\right), x>0 \\
\infty, x\le 0
\end{aligned}
\right.,
$$
where $\kappa$ is used to scale the barrier function penalty. As such,
the optimization problem can be rewritten as:
\begin{align} \label{op1}
  \begin{array}{ll}
    \min\limits_{\left\{ \bm{\xi} \right\}}
 G{\left(\bm{\xi}  \right)}
&=F\left( 1-{\left\| \bm \xi\right\|}_{p} \right)
 -\beta_{\text{B2I2U}} {\bm \xi}^H \left( {\bf B} \odot {\left(\bar{\bf H}_{\text{B2I}} {\bf w} {\bf w}^H
  \bar{\bf H}_{\text{B2I}}^H  \right)}^T \right) {\bm \xi}
    \\
 &- \sqrt{\beta_{\text{B2I2U}}  \beta_{\text{B2U}} } {\bm \xi}^H \left( \bar{\bf H}_{\text{B2I}}^*{\bf w}^* \odot {\bf C w} \right)
  -\sqrt{\beta_{\text{B2I2U}}  \beta_{\text{B2U}} }\left( \bar{\bf H}_{\text{B2I}}^T{\bf w}^T \odot {\bf  w}^H  {\bf C }^H \right)  {\bm \xi}  \\
 \operatorname{s.t.} & \begin{array}[t]{lll}
             {\text{tr}}\left({\bm \xi}{\bm \xi}^H \right)=M.
           \end{array}
  \end{array}
\end{align}

  Due to the non-convex constraint $\text{tr}({\bm \xi} {\bm \xi}^H )=M$, the optimization problem (\ref{op1}) is non-convex. Next, we try to solve the above optimization problem by exploiting a gradient method. As such, sub-optimal solutions can be obtained.
The gradient of the objective function $ G{\left(\bm{\xi}  \right)}$ can be calculated as
\begin{align}
\nabla_{{\bm \xi}} G{\left(\bm{\xi}  \right)}= \frac{ {\left\| \bm \xi\right\|}_{p}^{1-p} {\bm \zeta} }{2 \kappa \left( 1-{\left\| \bm \xi\right\|}_{p} \right)}-\nabla_{{\bm \xi}} P{\left(\bm{\xi}  \right)},
\end{align}
where
\begin{align}\label{d5}
{\bm \zeta}={\left[\xi_1 {\left| \xi_1 \right|}^{p-2}, \xi_2 {\left| \xi_2 \right|}^{p-2}, ...,\xi_M {\left| \xi_M \right|}^{p-2}  \right]}^T.
\end{align}

And the gradient $\nabla_{{\bm \xi}} P{\left(\bm{\xi}  \right)}$ can be computed as
\begin{align}
\nabla_{{\bm \xi}} P{\left(\bm{\xi}  \right)}\!=\!
2 \beta_{\text{B2I2U}} \left( {\bf B} \odot {\left(\bar{\bf H}_{\text{B2I}} {\bf w} {\bf w}^H
  \bar{\bf H}_{\text{B2I}}^H  \right)}^T\right){\bm \xi}
 \! +\!2 \sqrt{\beta_{\text{B2I2U}} \beta_{\text{B2U}}} {\text{Re}} \left(  \left( \bar{\bf H}_{\text{B2I}}^*{\bf w}^* \odot {\bf C w} \right) {\bf 1}_{N} \right),
\end{align}
where ${\bf 1}_{N}$ is a $N \times 1 $ vector whose elements all equal to 1.

Thus, $\nabla_{{\bm \xi}} G{\left(\bm{\xi}  \right)}$ can be rewritten as
\begin{align} \label{E18}
\nabla_{{\bm \xi}} G{\left(\bm{\xi}  \right)}& =
-2 \beta_{\text{B2I2U}} \left( {\bf B} \odot {\left(\bar{\bf H}_{\text{B2I}} {\bf w} {\bf w}^H
  \bar{\bf H}_{\text{B2I}}^H  \right)}^T\right) {\bm \xi} \\
&   -2 \sqrt{\beta_{\text{B2I2U}} \beta_{\text{B2U}}} {\text{Re}} \left(  \left( \bar{\bf H}_{\text{B2I}}^*{\bf w}^* \odot {\bf C w} \right) {\bf 1}_{N} \right)
+\frac{ {\left\| \bm \xi\right\|}_{p}^{1-p} {\bm \zeta} }{2 \kappa \left( 1-{\left\| \bm \xi\right\|}_{p} \right)}. \nonumber
\end{align}

 Due to the constraint $\text{tr}\left({\bm \xi}{\bm \xi}^H \right)=M$, we project ${\bf g}_{\text{gd}}=-\nabla_{{\bm \xi}} G{\left(\bm{\xi}  \right)}$ into the tangent plane of $\text{tr}\left({\bm \xi}{\bm \xi}^H \right)=M$:
  \begin{align} \label{E19}
  {\bf g}_{\text{p}}={\bf g}_{\text{gd}}-\frac{{\bf g}_{\text{gd}}^T{\bm \xi}^* {\bm \xi}}{{\left\| {\bm \xi} \right\|}^2}.
  \end{align}

  Then we use ${\bf g}_{\text{p}}$ as the search direction.
  Algorithm \ref{al1} provides the pseudo-code for the process. For convenience, we  collect the principal and important parameters and variables in Table I.

  \begin{proposition} \label{p5}
 The constant-modulus constraint $|\xi_i|=1,i=1,\dots,M$ is satisfied.
  \end{proposition}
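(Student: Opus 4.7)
The plan is to obtain the constant-modulus property as a direct algebraic consequence of the two feasibility conditions that Algorithm \ref{al1} enforces on $\bm{\xi}$. First I would isolate those conditions explicitly. The logarithmic barrier $F(1-\|\bm{\xi}\|_p)$ appearing in the objective of (\ref{op1}) equals $+\infty$ whenever $1-\|\bm{\xi}\|_p\le 0$, so every iterate produced by the gradient step must satisfy $\|\bm{\xi}\|_p\le 1$. Independently, the tangent-plane projection (\ref{E19}), together with the renormalization of the iterate in Algorithm \ref{al1}, is designed precisely to preserve the sphere constraint $\operatorname{tr}(\bm{\xi}\bm{\xi}^H)=\sum_{i=1}^M|\xi_i|^2=M$.

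Next I would combine the two. Using the elementary inequality $\|\bm{\xi}\|_\infty\le\|\bm{\xi}\|_p$, valid for every $p\ge 1$ (a consequence of $\|\bm{\xi}\|_p^p=\sum_i|\xi_i|^p\ge\max_i|\xi_i|^p$), the barrier condition yields $|\xi_i|\le 1$ for every $i$. Squaring and summing then gives $\sum_{i=1}^M|\xi_i|^2\le M$, with equality if and only if $|\xi_i|=1$ for all $i$. Since the projection step enforces the equality $\sum_{i=1}^M|\xi_i|^2=M$, we conclude that $|\xi_i|=1$ for every $i=1,\dots,M$, which is exactly the constant-modulus property.

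The only step that requires care is the verification that both feasibility conditions hold simultaneously at the algorithm's output. This amounts to checking that (i) the infinite penalty at $\|\bm{\xi}\|_p=1$ indeed prevents any violation of $\|\bm{\xi}\|_p\le 1$ at every gradient update, and (ii) the tangent-plane projection followed by renormalization exactly preserves the sphere norm, so the equality $\|\bm{\xi}\|_2^2=M$ is maintained throughout the inner loop. As the barrier is a hard barrier and the projection/normalization is an exact orthogonal projection onto the sphere, both conditions are enforced at every iteration, and the combination argument above delivers the proposition without any further limiting step. I expect no deeper obstacle than this routine bookkeeping; the heart of the proof is the two-line pigeonhole-style argument that $|\xi_i|\le 1$ together with $\sum_i |\xi_i|^2=M$ forces $|\xi_i|=1$.
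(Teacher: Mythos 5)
Your proof follows essentially the same route as the paper's: both reduce the constant-modulus condition to the conjunction of the sphere constraint $\sum_{i}|\xi_i|^2=M$ (attributed to the tangent-plane projection) and $\left\| \bm \xi \right\|_{\infty}\le 1$ (attributed to the barrier on the $\ell_p$ norm), and then combine the two. Your write-up is in fact slightly tighter in the two places the paper leaves implicit --- you use the exact inequality $\left\| \bm \xi \right\|_{\infty}\le \left\| \bm \xi \right\|_{p}$ instead of the limiting statement $\underset{p\to \infty}{\lim}\,\ell_p=\ell_\infty$, and you spell out the equality-case argument that $|\xi_i|\le 1$ together with $\sum_{i}|\xi_i|^2=M$ forces $|\xi_i|=1$ --- although, like the paper, you take on faith that the finite convex-combination update (\ref{E17}) keeps the iterate exactly on the sphere, which strictly speaking is guaranteed only by the element-wise phase normalization in the output step of Algorithm \ref{al1}.
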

  \begin{proof}
  See Appendix \ref{A7}.
  \end{proof}

 \begin{algorithm}[!t]         
\caption{IRS Beamforming Algorithm}            
\label{al1}
\begin{algorithmic}[1]
\State $\mathbf{Initialization:}$ Given a feasible phase shift vector ${\bm \xi}_{1}$, a large $p>0$,
the iteration index $i=0$, the maximum iterations $N_{\text{iter}}>0$, halting criterion $\varepsilon>0$ and the barrier coefficient $\kappa>0$.

\Repeat
\State $i \leftarrow i+1$.
  \State  Compute the gradient as per (\ref{E18}).
  \State  Compute the search direction $ {\bf g}_{\text{p}}$ as per (\ref{E19}).

    \State  For $0 \le \varpi \le 1$, searching for it by
   \begin{align}
   \varpi ^{\star}= \underset{\varpi}{\arg \max} \ P\left( \left(1-\varpi \right) {\bm \xi}_{i}+\varpi \sqrt{M} \frac{{\bf g}_{\text{p}}}{{\left\| {\bf g}_{\text{p}}\right\|}^2} \right).
     \end{align}

     \State Update:
     \begin{align} \label{E17}
     {\bm \xi}_{i+1}=\left(1-\varpi^{\star} \right) {\bm \xi}_{i}+\varpi^{\star} \sqrt{M} \frac{{\bf g}_{\text{p}}}{{\left\| {\bf g}_{\text{p}}\right\|}^2}.
     \end{align}

\Until ${\left| P\left({\bm \xi}_{i+1} \right)- P\left({\bm \xi}_{i} \right) \right|} \le \varepsilon$ or $i \ge N_{\text{iter}}$.

\State $\mathbf{Output:}$ ${\bm \xi}^{\star} =\exp\left[j \text{angle} ({\bm \xi_{i}}) \right]$.
  \end{algorithmic}
\end{algorithm}

\begin{table}
\centering
\label{table1}
\begin{tabular} {|c|c|}
 \hline
 {\bf Parameter} & {\bf Definition} \\
 \hline
Variable ${\bf w} $ & Transmit beamforming vector.  \\
 \hline
Variable ${\bm \xi}$  & Phase shift vector. \\
 \hline
Variable ${\bm \Theta}$  & Phase shift matrix ${\bm \Theta}=\text{diag}({\bm \xi})$  \\
 \hline
 ${\beta}_{\text{B2I2U}}$,${\beta}_{\text{B2U}}$  & Defined by (\ref{d2}) and (\ref{d3}),respectively. \\
 \hline
${\bf B}$,${\bf C}$ & Defined by (\ref{E24}) and (\ref{E25}), respectively. \\
 \hline
$\bar{\bf H}_\text{B2I}$   &  Defined by (\ref{d1}).\\
 \hline
 \end{tabular}
\caption{   Parameter Table.}
\end{table}

\subsection{Joint optimization of BS beamforming and IRS beamforming }
According to \ref{sb1} and \ref{sb2}, a joint optimization scheme is developed in Algorithm \ref{al2}, where the BS beam and phase shift beam are iteratively optimized.

  \begin{algorithm}[!t]         
\caption{Joint Optimization Algorithm}            
\label{al2}
\begin{algorithmic}[1]
\State $\mathbf{Initialization:}$ Given feasible initial solutions ${\bm \xi}_{0}$, ${\bf w}_{0}$ and the iteration index $i=0$.

\Repeat

\State Perform BS beamforming: According to ${\bm \xi}_{0}$, optimize the BS beam by invoking the result (\ref{E16}) in \ref{sb1},  which yields  ${\bf w}_{i+1}$.
\State Perform IRS beamforming: Based on ${\bf w}_{i+1}$, optimize the transmit beamformer via Algorithm \ref{al1}, which yields ${\bm \xi}_{i+1}$.

\State $i \leftarrow i+1$.

\Until ${\bm \xi}_{i}$ and ${\bf w}_{i}$ are converged.

\State $\mathbf{Output:}$ ${\bm \xi}^{\star} ={\bm \xi}_i$ and ${\bf w}^{\star}={\bf w}_{i}$.

  \end{algorithmic}
\end{algorithm}

\begin{proposition} \label{p4}
The joint optimization algorithm always guarantees $P\left( {\bf w}_{i+1},{\bm \xi}_{i+1}\right) \ge P\left( {\bf w}_{i},{\bm \xi}_{i}\right) $.
\end{proposition}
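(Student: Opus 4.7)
\textbf{Proof proposal for Proposition \ref{p4}.}
The natural approach is the classical block coordinate ascent monotonicity argument: decompose one outer iteration into the two inner updates and show each update is non-decreasing in $P$. Concretely, I would insert the intermediate point $({\bf w}_{i+1},{\bm \xi}_i)$ and prove the chain
\begin{equation*}
P({\bf w}_i,{\bm \xi}_i) \;\le\; P({\bf w}_{i+1},{\bm \xi}_i) \;\le\; P({\bf w}_{i+1},{\bm \xi}_{i+1}),
\end{equation*}
from which the claim follows immediately.

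For the first inequality I would freeze ${\bm \xi}={\bm \xi}_i$, so that ${\bf T}={\bf T}({\bm \xi}_i)$ is a fixed Hermitian matrix. The BS update (\ref{E16}) returns ${\bf w}_{i+1}=\sqrt{P_{\text{BS}}}\,{\bf t}_{\max}$, which is the closed-form global maximizer of the quadratic ${\bf w}^H{\bf T}{\bf w}$ on the feasible set $\|{\bf w}\|^2\le P_{\text{BS}}$; this was established via the KKT analysis in Section \ref{sb1}. Since ${\bf w}_i$ is itself feasible, global optimality at once yields $P({\bf w}_{i+1},{\bm \xi}_i)\ge P({\bf w}_i,{\bm \xi}_i)$.

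For the second inequality I would freeze ${\bf w}={\bf w}_{i+1}$ and examine Algorithm \ref{al1}. The line search in step~6 picks $\varpi^\star\in[0,1]$ maximizing $P$ along the segment from ${\bm \xi}_i$ (at $\varpi=0$) to the projected-gradient endpoint. Since $\varpi=0$ is always in the search interval, $P$ is non-decreasing at every inner iteration, and induction over the inner loop gives $P({\bm \xi}_{\text{final}})\ge P({\bm \xi}_i)$. The remaining subtlety is the output line ${\bm \xi}^\star=\exp[j\,\text{angle}({\bm \xi}_{\text{final}})]$; here I would invoke Proposition \ref{p5}, which guarantees $|\xi_k|=1$ for all $k$ after the inner loop, so this normalization acts as the identity and preserves $P$. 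Chaining the two inequalities then gives the proposition.

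The main obstacle is handling the inner loop carefully: one might be tempted to treat Algorithm \ref{al1} as a textbook gradient ascent on $P$, but its objective $G$ contains a logarithmic barrier and an $\ell_p$ surrogate of the $\ell_\infty$ constraint, so the natural monotonicity is of $G$, not directly of $P$. The saving grace is that step~6 line-searches over $P$ itself rather than over $G$, so the ``$\varpi=0$ is admissible'' observation sidesteps the discrepancy entirely. A small additional hygiene check is that the ${\bm \xi}_i$ passed in by the previous outer iteration satisfies $\text{tr}({\bm \xi}_i{\bm \xi}_i^H)=M$ and $|\xi_{i,k}|=1$ for every $k$; this can be verified inductively from a unit-modulus initialization together with Proposition \ref{p5}, ensuring that the entire search segment remains in the feasible set across outer iterations.
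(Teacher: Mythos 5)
Your proposal is correct, and its skeleton --- inserting an intermediate point and proving monotonicity of each block update --- is the same block-coordinate-ascent argument the paper uses. The two halves differ in execution, though. For the BS step you argue exactly as the paper does (global optimality of the eigenvector solution \eqref{E16} over the feasible ball, with ${\bf w}_i$ feasible). For the IRS step the paper takes a different and more fragile route: it writes a first-order Taylor expansion of $P$ around ${\bm \xi}_i$, shows that the barrier-gradient component of $\nabla_{\bm \xi}G$ is orthogonal to the projected direction ${\bf g}_{\text p}$, and computes ${\bf g}_{\text{gd}}^H{\bf g}_{\text p}=\|{\bf g}_{\text{gd}}\|^2(1-\cos^2\varrho)\ge 0$ to conclude that ${\bf g}_{\text p}$ is an ascent direction --- an argument that is only valid asymptotically as $\varpi^\star\to 0$ and is stated with that caveat. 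Your observation that step 6 of Algorithm \ref{al1} line-searches $P$ itself over $\varpi\in[0,1]$ and that $\varpi=0$ recovers ${\bm \xi}_i$ is simpler, holds for any selected $\varpi^\star$, and cleanly sidesteps the $G$-versus-$P$ discrepancy you correctly flag; you also handle the final phase-only normalization via Proposition \ref{p5}, which the paper's proof silently ignores. One further point in your favor: since Algorithm \ref{al2} updates ${\bf w}$ first and then ${\bm \xi}$, your chain through $({\bf w}_{i+1},{\bm \xi}_i)$ is the one consistent with the update order, whereas the paper's concluding chain passes through $({\bf w}_i,{\bm \xi}_{i+1})$ even though the two inequalities it actually establishes are those for your intermediate point.
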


\begin{proof}
1) Proof of $P\left( {\bf w}_{i+1},{\bm \xi}_{i+1}\right) \ge P\left( {\bf w}_{i},{\bm \xi}_{i+1}\right)$

By fixing ${\bm \xi}$, the optimization of ${\bf w}$ is a convex problem. And, ${\bf w}_{i+1}$ is the optimal solution corresponding to the phase shift beam ${\bm \xi}_{i+1}$.
Thus, we have
\begin{align}
P\left( {\bf w}_{i+1},{\bm \xi}_{i+1}\right) \ge P\left( {\bf w}_{i},{\bm \xi}_{i+1}\right).
\end{align}

2) Proof of $P\left( {\bf w}_{i+1},{\bm \xi}_{i+1}\right) \ge P\left( {\bf w}_{i+1},{\bm \xi}_{i}\right)$

Fixing ${\bf w}_{i+1}$,  the Taylor expansion of $P\left( {\bf w}_{i+1},{\bm \xi}_{i+1}\right)$ can be expressed as
\begin{align}
P\left( {\bf w}_{i+1},{\bm \xi}_{i+1}\right)&=
P\left({\bf w}_{i+1}, {\bm \xi}_{i}\right)+ {\left( \nabla_{{\bm \xi}_{i}} P\right)}^H
\left\{ P\left( {\bf w}_{i+1},{\bm \xi}_{i+1}\right)-P\left( {\bf w}_{i+1},{\bm \xi}_{i}\right)  \right\} \\
&+ o\left\{ {\left( P\left( {\bf w}_{i+1},{\bm \xi}_{i+1}\right)-P\left( {\bf w}_{i+1},{\bm \xi}_{i}\right) \right)}^2 \right\}. \nonumber
\end{align}

Based on (\ref{E17}) in Algorithm \ref{al1}, for any $\varpi \to 0$, the above equation can be written as
\begin{align} \label{E22}
P\left( {\bf w}_{i+1},{\bm \xi}_{i+1}\right)&=
P\left({\bf w}_{i+1}, {\bm \xi}_{i}\right)+ {\left( \nabla_{{\bm \xi}_{i}} P\right)}^H
\sqrt{M} \frac{{\bf g}_{\text p} }{{\left\|{\bf g}_{\text p} \right\|}^2} \varpi^{\star}
+ o\left\{ {\varpi^{\star}}^2 \right\}\\
& \approx  P\left({\bf w}_{i+1}, {\bm \xi}_{i}\right)+ {\left( \nabla_{{\bm \xi}_{i}} P\right)}^H
{\bf g}_{\text p}  \frac{\sqrt{M} }{{\left\|{\bf g}_{\text p} \right\|}^2} \varpi^{\star}. \nonumber
\end{align}

Next, we focus on the calculation of ${\left( \nabla_{{\bm \xi}_{i}} P\right)}^H
{\bf g}_{\text p}$.
\begin{align}
{\left( \nabla_{{\bm \xi}_{i}} P\right)}^H
{\bf g}_{\text p}
={\left({\bf g}_{\text{gd}} + \nabla_{{\bm \xi}_{i}}F \left(1- {\left\|{\bm \xi}_i \right\|}^2 \right) \right)}^H {\bf g}_{\text p}.
\end{align}

  Recall that
$
\nabla_{{\bm \xi}_{i}}F \left(1- {\left\|{\bm \xi}_i \right\|}^2 \right) =\frac{ {\left\| \bm \xi_i\right\|}_{p}^{1-p} {\bm \Omega \bm \xi_{i}} }{2 \kappa \left( 1-{\left\| \bm \xi_i\right\|}_{p} \right)}
$
with
$
{\bm \Omega}= \text{diag} \left( {\left| \xi_1 \right|}^{p-2},   {\left| \xi_2 \right|}^{p-2}, ..., {\left| \xi_M \right|}^{p-2} \right)
$, and ${\bf g}_{\text{p}}={\bf g}_{\text{gd}}-\frac{{\bf g}_{\text{gd}}^T{\bm \xi_i}^* {\bm \xi_i}}{{\left\| {\bm \xi_i} \right\|}^2}$. We have ${\left( \nabla_{{\bm \xi}_{i}}F \left(1- {\left\|{\bm \xi}_i \right\|}^2 \right) \right)}^H {\bf g}_{\text p}=0$, and thus the above equation can be calculated as
\begin{align} \label{E23}
{\left( \nabla_{{\bm \xi}_{i}} P\right)}^H
{\bf g}_{\text p}
={\bf g}_{\text{gd}}^H {\bf g}_{\text p}
={\left\| {\bf g}_{\text{gd}} \right\|}^2 \left(1-{\left(\cos\varrho \right)}^2 \right) \ge 0,
\end{align}
where $\varrho= \text{arcos} \left( \frac{ {\bf g}_{\text{gd}}^H {\bm \xi}_i}
{\left\| {\bf g}_{\text{gd}} \right\| \left\| {\bm \xi}_i \right\|} \right)$ is the angle between vectors ${\bf g}_{\text{gd}} $ and ${\bm \xi}_i$.

Substituting (\ref{E23}) into (\ref{E22}),  we obtain
\begin{align}
P\left( {\bf w}_{i+1},{\bm \xi}_{i+1}\right) -P\left({\bf w}_{i+1}, {\bm \xi}_{i}\right)
={\left( \nabla_{{\bm \xi}_{i}} P\right)}^H
{\bf g}_{\text p}  \frac{\sqrt{M} }{{\left\|{\bf g}_{\text p} \right\|}^2} \varpi^{\star} \ge 0.
\end{align}

To this end, combining 1) and 2), we have
\begin{align}
P\left( {\bf w}_{i+1},{\bm \xi}_{i+1}\right) \ge P\left( {\bf w}_{i},{\bm \xi}_{i+1}\right) \ge
P\left( {\bf w}_{i},{\bm \xi}_{i}\right) .
\end{align}

\end{proof}

\begin{remark}
Our method of iteratively optimizing the transmit beam and the phase shift beam can provide an efficient way to  gradually increase the average received signal power, although  only  a
sub-optimal solution can be ensured due to the non-convexity of the problem.
\end{remark}

\begin{remark}

For the BS beamforming scheme, the complexity is consumed by the eigenvector calculation. Therefore, the complexity order of BS beamforming is $\mathcal{O}\left( N^3 \right)$.

For the IRS beamforming algorithm , the computational complexity is mainly determined by the calculation of the gradient (\ref{E18}), involving computing the $\ell_p$ norm and the matrix multiplication $\left( {\bf B} \odot {\left(\bar{\bf H}_{\text{B2I}} {\bf w} {\bf w}^H
  \bar{\bf H}_{\text{B2I}}^H  \right)}^T\right) {\bm \xi} $. Hence, the complexity order of Algorithm \ref{al1} for each iteration is $\mathcal{O} \left( M^2+pM\right)$.

Finally, the overall complexity order of the joint optimization algorithm  for each iteration is given by
$
\mathcal{O}\left( N_{\text{iter}} \left( M^2+pM\right) +N^3 \right),
$
where $N_{\text{iter}}$ is specified in Algorithm \ref{al1}.

\end{remark}

\section{Achievable Rate analysis}\label{s4}

In this section, we will present detailed analysis of the achievable rate.
\begin{theorem} \label{t3}
The achievable rate is given by
\begin{align} \label{E12}
R=\log_2\left(1+\frac{\tilde{\bf w}^H {\bf T} \tilde{\bf w}}{\sigma_{0}^2} \right),
\end{align}
where $\tilde{\bf w}  \triangleq \sqrt{P_\text{BS} } {\bf t}_{\text{max}}$.
\end{theorem}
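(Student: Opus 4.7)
The plan is to obtain Theorem \ref{t3} by combining three ingredients already in place: the received signal expression at the user, the optimal BS beamformer derived in Section \ref{sb1}, and the closed-form for the average received signal power given by Proposition \ref{p2}. Concretely, I would start from
\begin{equation}
y_{\text{U}}=\bigl({\bf h}_{\text{B2U}}^T+{\bf h}_{\text{I2U}}^T{\bm \Theta}{\bf H}_{\text{B2I}}\bigr){\bf w}\,s+n_{\text{U}},
\end{equation}
observe that only the statistics of the composite channel are known at the transmitter (since the design is driven by estimated angles), and apply the standard achievable rate expression $R=\log_{2}\bigl(1+P_{\text{sig}}/P_{\text{noise}}\bigr)$ in which $P_{\text{sig}}$ is the \emph{average} received signal power and $P_{\text{noise}}=\sigma_{0}^{2}$ is the AWGN variance.

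Next I would substitute the optimal transmit beam derived in Section \ref{sb1}, namely $\tilde{\bf w}=\sqrt{P_{\text{BS}}}\,{\bf t}_{\max}$ from equation (\ref{E16}), into the definition of $P_{r}$. By Proposition \ref{p2} we already know that $P_{r}=\mathbb{E}\{|({\bf h}_{\text{B2U}}^T+{\bf h}_{\text{I2U}}^T{\bm \Theta}{\bf H}_{\text{B2I}}){\bf w}|^{2}\}={\bf w}^{H}{\bf T}{\bf w}$, with ${\bf T}$ expressed in closed form in terms of the estimated effective angles, the Rician factors, the large-scale fading coefficients and the angle-estimation variance $\sigma_{\text{est}}^{2}$. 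Plugging $\tilde{\bf w}$ into this identity yields $P_{\text{sig}}=\tilde{\bf w}^{H}{\bf T}\tilde{\bf w}$, and dividing by $\sigma_{0}^{2}$ and taking $\log_{2}(1+\cdot)$ gives exactly (\ref{E12}).

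The only subtlety, and the one I would emphasise in the write-up, is the justification for replacing the instantaneous signal power inside the logarithm by its mean. The cleanest way is to note that for a channel with a dominant LOS component, the effective channel $({\bf h}_{\text{B2U}}^T+{\bf h}_{\text{I2U}}^T{\bm \Theta}{\bf H}_{\text{B2I}})\tilde{\bf w}$ can be split into a deterministic mean term and a zero-mean perturbation due to NLOS scattering and residual angle-estimation error; treating the perturbation as a worst-case independent Gaussian contribution and invoking the mutual-information inequality $\mathbb{E}\{\log_{2}(1+X)\}\le\log_{2}(1+\mathbb{E}\{X\})$ yields the stated expression as a tractable surrogate whose sharpness will be quantified by the approximation and upper/lower bounds obtained in the remainder of Section \ref{s4}. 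I expect no real obstacle in the derivation itself; the main writing challenge is simply to state cleanly which convention (ergodic surrogate, Jensen upper bound, or the averaged-power formula routinely used in IRS literature) is being adopted, so that the subsequent bound and approximation results are on a consistent footing.
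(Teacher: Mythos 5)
Your proposal matches the paper's own (one-line) proof: the paper simply invokes Proposition \ref{p2} to write the average received power as ${\bf w}^H{\bf T}{\bf w}$, substitutes the optimal beam $\tilde{\bf w}=\sqrt{P_{\text{BS}}}\,{\bf t}_{\max}$ from (\ref{E16}), and reads off $R=\log_2\bigl(1+\tilde{\bf w}^H{\bf T}\tilde{\bf w}/\sigma_0^2\bigr)$. Your additional remark about the convention of placing the \emph{average} signal power inside the logarithm (and the Jensen-type caveat that this is a surrogate rather than a true ergodic rate) is a legitimate subtlety that the paper silently glosses over, but it does not change the route taken.
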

\begin{proof}
Starting from Proposition \ref{p2}, the result is readily obtained.
\end{proof}

Theorem \ref{t3} presents an expression for the achievable rate which quantifies the impact
of key system parameters, such as the number of BS antennas and reflecting elements, as well as the impact of estimation error on the achievable rate. For instance, it can be seen that the SNR is related to $\bf T$.
 From the expression of $\bf T$ which is defined  in  Proposition 2, we can conclude that
the SNR drops nearly exponentially with the variance of the angle estimation error. This is because inaccurate estimated angles  makes it difficult to generate highly directional beams, thereby causing severe power loss.

\begin{corollary} \label{c2}
Assuming a large Rician K-factor and that the IRS and the user are in the similar direction, the achievable rate can be approximated as
\begin{align}
R_{\text{approx}}=\log_2 \left(1+\frac{ P_{\text{BS}} \Omega  }{\sigma_0^2} \right),
\end{align}
where
\begin{align}\label{E27}
\Omega & \triangleq
N \beta_{\text{B2I2U}} \sum\limits_{m=1}^{M} \sum\limits_{n=1}^{M} { \xi}_{m}^{*}{ \xi}_{n} {\left[ \bf B \right]}_{mn} b_{\text{B2Ia},m}^* b_{\text{B2Ia},n}\\
&+2\sqrt{\beta_{\text{B2I2U}}  \beta_{\text{B2U}} } \text{Re} \left\{ \sum\limits_{m=1}^{M} { \xi}_{m}^{*}
  b_{\text{B2Ia},m}^* \left(\sum\limits_{i=1}^{N} {\left[ \bf C \right]}_{mi}  a_{\text{B2I},i}^*  \right)\right\}+N \beta_{\text{B2U}}+N \sigma_{\text{NLOS}}^2, \nonumber
\end{align}
where $a_{\text{B2I},m}$ and $b_{\text{B2Ia},m}$ are the $m$-th elements of ${\bf a}^T \left(\bar \theta_{\text{x-B2I}},\bar\theta_{\text{y-B2I}} \right)$ and ${\bf b}^T \left(\bar \theta_{\text{x-B2Ia}},\bar\theta_{\text{y-B2Ia}} \right)$, respectively.
\end{corollary}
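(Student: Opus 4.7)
The plan is to start from the exact expression in Theorem \ref{t3}, $R=\log_2(1+\tilde{\bf w}^H {\bf T}\tilde{\bf w}/\sigma_0^2)$, and simplify both the kernel matrix ${\bf T}$ and the optimal beam $\tilde{\bf w}=\sqrt{P_{\text{BS}}}\,{\bf t}_{\max}$ under the two standing hypotheses of the corollary. Under a large Rician $K$-factor, Proposition \ref{p1} together with \eqref{delta_est} in Corollary \ref{c1} gives $\sigma_{\text{est}}^2\to 0$, so every exponential damping factor in \eqref{E24} and \eqref{E25} (and in the expression for ${\bf A}$) tends to unity. Hence ${\bf A}\to\hat{\bf A}$, ${\bf B}\to\hat{\bf B}$, ${\bf C}\to\hat{\bf C}$, which already removes all estimation-error-induced attenuation from ${\bf T}$.

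Next I would use the similar-direction assumption to align the BS-to-user and BS-to-IRS steering vectors, i.e.\ ${\bf a}(\hat{\bar\theta}_{\text{x-B2U}},\hat{\bar\theta}_{\text{y-B2U}})\approx{\bf a}(\bar\theta_{\text{x-B2I}},\bar\theta_{\text{y-B2I}})$. With this identification, $\hat{\bf A}$, $\hat{\bf B}$ and $\hat{\bf C}$ all collapse to rank-one outer products built from ${\bf a}(\bar\theta_{\text{x-B2I}},\bar\theta_{\text{y-B2I}})$ and ${\bf b}(\bar\theta_{\text{x-B2Ia}},\bar\theta_{\text{y-B2Ia}})$. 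Inspecting the four contributions to ${\bf T}$ in Proposition \ref{p2}, every LOS-type term is then proportional to ${\bf a}^{*}(\bar\theta_{\text{x-B2I}},\bar\theta_{\text{y-B2I}}){\bf a}^{T}(\bar\theta_{\text{x-B2I}},\bar\theta_{\text{y-B2I}})$ (possibly weighted by a scalar formed from ${\bf B}$, ${\bf C}$, ${\bm \xi}$ and $\bar{\bf H}_{\text{B2I}}$), plus a residual $\sigma_{\text{NLOS}}^{2}{\bf I}_{N}$. The largest eigenvector of such a rank-one-plus-scaled-identity matrix is the steering vector itself, so ${\bf t}_{\max}\approx {\bf a}^{*}(\bar\theta_{\text{x-B2I}},\bar\theta_{\text{y-B2I}})/\sqrt{N}$ and therefore $\tilde{\bf w}\approx\sqrt{P_{\text{BS}}/N}\,{\bf a}^{*}(\bar\theta_{\text{x-B2I}},\bar\theta_{\text{y-B2I}})$.

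With this closed-form $\tilde{\bf w}$ in hand, I would evaluate each of the four quadratic forms in ${\bf T}$ using the two elementary identities ${\bf a}^{T}(\bar\theta_{\text{x-B2I}},\bar\theta_{\text{y-B2I}})\,{\bf a}^{*}(\bar\theta_{\text{x-B2I}},\bar\theta_{\text{y-B2I}})=N$ and $\bar{\bf H}_{\text{B2I}}\,\tilde{\bf w}=\sqrt{P_{\text{BS}}N}\,{\bf b}(\bar\theta_{\text{x-B2Ia}},\bar\theta_{\text{y-B2Ia}})$. The ${\bf B}$-term then produces the double sum $N\beta_{\text{B2I2U}}\sum_{m,n}\xi_m^{*}\xi_n[{\bf B}]_{mn}b_{\text{B2Ia},m}^{*}b_{\text{B2Ia},n}$, the ${\bf C}$-cross-terms combine via $\text{Re}\{\cdot\}$ into the second line of \eqref{E27}, and the ${\bf A}$-term collapses to $N\beta_{\text{B2U}}$ by the same inner-product identity. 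Factoring out $P_{\text{BS}}$ from every contribution produces the quantity $P_{\text{BS}}\Omega$ and substitution into $\log_2(1+\cdot/\sigma_0^{2})$ gives $R_{\text{approx}}$.

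The main obstacle is the eigenvector approximation step: strictly speaking, ${\bf T}$ is a sum of a dominant rank-one piece and subdominant pieces (the NLOS residual and the cross-term involving ${\bf C}$), so one must argue that under the large-$K$, co-directional regime the subdominant pieces either scale with the same leading direction or are $O(1)$ relative to a dominant $O(N^{2}M^{2})$ term, so that the principal eigenvector is indeed arbitrarily close to ${\bf a}^{*}(\bar\theta_{\text{x-B2I}},\bar\theta_{\text{y-B2I}})/\sqrt{N}$. A Rayleigh-quotient lower bound obtained by plugging this candidate into $\tilde{\bf w}^{H}{\bf T}\tilde{\bf w}$ and matching with a standard eigenvalue perturbation upper bound is the cleanest way to close that gap; everything else is routine bookkeeping using the identities above.
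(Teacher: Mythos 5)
Your overall strategy is sound but diverges from the paper at the key step, and as written it does not land exactly on the stated formula. The paper's proof never constructs an approximate eigenvector at all: it writes $P_{\text{r}}=P_{\text{BS}}\lambda_{\max}$, invokes $\operatorname{tr}({\bf T})=\sum_n\lambda_{T,n}$ together with the sparsity of the channel (which is where the large $K$-factor and co-directionality enter) to assert $\lambda_{\max}\approx\operatorname{tr}({\bf T})$, and then computes $\operatorname{tr}({\bf T})$ exactly using $\bar{\bf H}_{\text{B2I}}={\bf b}\,{\bf a}^{T}$. This is why the stated $\Omega$ retains the damped matrices $[{\bf B}]_{mn}$ and $[{\bf C}]_{mi}$ (the trace is computed with the estimation-error attenuation intact, not with $\hat{\bf B},\hat{\bf C}$), and why no eigenvector perturbation argument is needed. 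Your Rayleigh-quotient route avoids the somewhat cavalier "largest eigenvalue dominates the trace" step, which is a genuine merit, but it forces you into exactly the obstacle you flag at the end.

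The concrete problem is that your route produces a different last term. Plugging $\tilde{\bf w}=\sqrt{P_{\text{BS}}/N}\,{\bf a}^{*}(\bar\theta_{\text{x-B2I}},\bar\theta_{\text{y-B2I}})$ into the identity block gives $\sigma_{\text{NLOS}}^{2}\,\tilde{\bf w}^{H}\tilde{\bf w}=P_{\text{BS}}\sigma_{\text{NLOS}}^{2}$, whereas the stated $\Omega$ contains $N\sigma_{\text{NLOS}}^{2}$, which arises only because the paper takes $\operatorname{tr}(\sigma_{\text{NLOS}}^{2}{\bf I}_{N})=N\sigma_{\text{NLOS}}^{2}$ as part of $\lambda_{\max}$. (Your value is in fact the more accurate one, since adding $\sigma_{\text{NLOS}}^{2}{\bf I}_{N}$ shifts every eigenvalue by exactly $\sigma_{\text{NLOS}}^{2}$; the paper's trace approximation over-counts this block by a factor of $N$. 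But the corollary you are asked to prove contains $N\sigma_{\text{NLOS}}^{2}$.) The remaining three terms do match: ${\bf a}^{T}{\bf a}^{*}=N$ makes your $\bar{\bf H}_{\text{B2I}}\tilde{\bf w}=\sqrt{P_{\text{BS}}N}\,{\bf b}$ reproduce the ${\bf B}$ double sum and the ${\bf C}$ cross term, and the co-directionality assumption recovers $N\beta_{\text{B2U}}$ from the ${\bf A}$ block. So either adopt the paper's trace identity (which also removes the need for the eigenvalue-perturbation argument you sketch in your last paragraph), or accept that your method proves a slightly different, sharper approximation than the one stated.
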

\begin{proof}
See Appendix \ref{A5}.
\end{proof}

Corollary \ref{c2} implies that increasing the number of BS antennas can greatly improve the achievable rate. In particular, both the BS-user link and BS-IRS-user link SNRs grow linearly with $N$.
To get more insights, we derive an upper bound for the achievable rate.

\begin{proposition}\label{p3}
An upper bound for the achievable rate is given by
\begin{align} \label{E28}
R_{\text{upper}}=\log_2 \left(1+\frac{ P_{\text{BS}} N \left(\beta_{\text{B2I2U}}M^2+2 \sqrt{\beta_{\text{B2I2U}}  \beta_{\text{B2U}} }M+  \beta_{\text{B2U}}+ \sigma^2_{\text{NLOS}} \right)  }{\sigma_0^2} \right).
\end{align}
\end{proposition}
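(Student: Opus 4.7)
The plan is to derive Proposition~\ref{p3} by upper-bounding the expression $\Omega$ in Corollary~\ref{c2} term by term, and then invoking the monotonicity of $x\mapsto\log_{2}(1+x)$. First I would observe that $R_{\text{approx}}$ in Corollary~\ref{c2} is already the suitable starting point, since any upper bound on $\Omega$ immediately translates into an upper bound on the achievable rate of the form (\ref{E28}).

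For the BS--IRS--user quadratic term in (\ref{E27}), the constant-modulus constraints $|\xi_{m}|=1$ together with $|b_{\text{B2Ia},m}|=1$ reduce the task to bounding $|[{\bf B}]_{mn}|$. From (\ref{E24}), $[{\bf B}]_{mn}$ factorizes as the unit-modulus entry $[\hat{\bf B}]_{mn}$ (since $\hat{\bf B}={\bf b}^{*}{\bf b}^{T}$ has each entry equal to the product of two unit-modulus scalars) multiplied by a strictly positive real exponential whose exponent $-\tfrac{1}{2}\sigma_{\text{est}}^{2}(\cdot^{2}+\cdot^{2})$ is non-positive, so $|[{\bf B}]_{mn}|\le 1$. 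Applying the triangle inequality to the $M^{2}$ summands then gives a bound of $N\beta_{\text{B2I2U}}M^{2}$ on this term. The cross term is handled analogously: $|\text{Re}(z)|\le|z|$ combined with $|[{\bf C}]_{mi}|\le 1$ (same factorization argument via (\ref{E25})) and $|a_{\text{B2I},i}|=1$ yields a bound of $2NM\sqrt{\beta_{\text{B2I2U}}\beta_{\text{B2U}}}$. The remaining two terms $N\beta_{\text{B2U}}$ and $N\sigma_{\text{NLOS}}^{2}$ are already constants independent of ${\bm \xi}$ and pass through unchanged.

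Summing the four bounds and factoring out the common $N$ yields $\Omega\le N(\beta_{\text{B2I2U}}M^{2}+2M\sqrt{\beta_{\text{B2I2U}}\beta_{\text{B2U}}}+\beta_{\text{B2U}}+\sigma_{\text{NLOS}}^{2})$, and substituting into $R_{\text{approx}}$ produces exactly (\ref{E28}). The main obstacle, albeit a mild one, is justifying rigorously that the exponential damping factors in (\ref{E24}) and (\ref{E25}) really do have modulus at most one; this reduces to noting that each exponent is an explicit sum of two real squares multiplied by $-\tfrac{1}{2}\sigma_{\text{est}}^{2}\le 0$, so the factors necessarily lie in $(0,1]$. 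Conceptually, equality in the bound would correspond to the idealized regime of perfect angle estimation ($\sigma_{\text{est}}^{2}\to 0$) with the IRS phases perfectly aligned to the estimated line-of-sight directions, making (\ref{E28}) a clean benchmark that exposes the $NM^{2}$ scaling of the cascaded link and the $N$ scaling of the direct link.
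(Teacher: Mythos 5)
Your proposal is correct and follows essentially the same route as the paper: starting from the expression for $\Omega$ in Corollary \ref{c2}, applying the triangle inequality, and using $\left|{\left[\bf B\right]}_{mn}\right| \le 1$ and $\left|{\left[\bf C\right]}_{mi}\right| \le 1$ (which the paper asserts and you justify slightly more explicitly via the non-positive Gaussian damping exponents). No substantive difference in approach.
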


\begin{proof}
Starting from Corollary \ref{c2} and using the fact that $\sum\limits_{m=1}^{M} a_m \le \sum\limits_{m=1}^{M} \left|a_m \right| $, we have
\begin{align}
\Omega \le
N \beta_{\text{B2I2U}} \sum\limits_{m=1}^{M} \sum\limits_{n= 1}^{M} \left| {\left[ \bf B \right]}_{mn} \right|
+2\sqrt{\beta_{\text{B2I2U}}  \beta_{\text{B2U}} }   \sum\limits_{m=1}^{M}  \sum\limits_{i=1}^{N} \left|{\left[ \bf C \right]}_{mi}\right|     +N \beta_{\text{B2U}}+N \sigma_{\text{NLOS}}^2.
\end{align}

Noticing that $\left| {\left[ \bf B \right]}_{mn} \right| \le 1$ and $\left|{\left[ \bf C \right]}_{mi}\right| \le 1$, (\ref{E28}) can be proved.
\end{proof}

Proposition \ref{p3} indicates that with fixed transmit power, the achievable rate is mainly determined by the numbers of reflecting elements and BS antennas. Specifically, the effective SNR is proportional to the number of BS antennas.  Moreover, there is a gain $M^2$ corresponding to the BS-IRS-user link. This is reasonable because the IRS not only achieves the phase shift beamforming gain of order $M$ in the IRS-user link, but also captures an inherent aperture gain of order $M$ by collecting more signal power in the BS-IRS link. It is worth noting that this $M^2$ gain only holds when $\sqrt{M A_p } \ll d_\text{B2I}$ with $A_p$  being the effective aperture/area of each reflecting element, due to the law of energy conservation \cite{bjornson2019demystifying}.

\begin{remark}
This paper mainly focuses on a single user scenario. However, the proposed angle domain design framework can be easily extended to the multi-user case. For instance, with orthogonal pilot sequences, the proposed angle-domain estimation method can be directly applied to separately estimate the effective angles of each user. Also, the alternating optimization method can be similarly applied to reduce the complexity of beamforming algorithms.
The key difference is that, in a multi-user scenario, the co-channel interference should be taken into account. In addition, since all users share the same BS and IRS beamforming vectors, the resultant optimization problem becomes much more complicated.
\end{remark}

\section{numerical  results}\label{s5}
In this section, we provide numerical results to illustrate the performance of the angle-domain IRS-aided system, as well as to verify the performance of the proposed ML estimator and the joint optimization scheme.
  The considered system operates at $2.45$ GHz. The large-scale fading coefficient is modeled as $\alpha= L_{\text{d}}^{-\chi}$, where $\chi$ is the path loss exponent, and $L_{\text{d}}$ is the transmission distance.
We assume the BS is located at the origin. The  locations of the user and the IRS  are denoted by $(d_{\text{B2U}},  \theta_\text{B2U},\phi_\text{B2U})$ and  $(d_{\text{B2I}},  \theta_\text{B2I},\phi_\text{B2I})$, respectively.
For all
simulations, unless otherwise specified, the following setup is
used: $N=16, M=256,  P_\text{BS}=10 \ \text{dBm}, \sigma_0^2=-60 \  \text{dBm}, v_{\text{B2U}} =v_{\text{I2U}}=v_{\text{B2I}}=5, \chi_\text{B2U}=\chi_\text{I2U}=\chi_\text{B2I}=2.5$, IRS location $(42 \text{m},63^\circ ,-16^\circ )$ and user location $(41 \text{m},47^\circ ,-16^\circ )$.

  Fig. \ref{f1} illustrates the performance of the proposed ML estimator given by Theorem \ref{t1}, where the  analytical results are generated by (\ref{delta_est}) in Corollary \ref{c1}. For comparison, the conventional angle-domain  estimation  method, i.e., the MUSIC method \cite{Trees2003Detection},  is presented as the benchmark.
Note that $\text{MSE}_\text{x-B2U}$ and $\text{MSE}_\text{y-B2U}$ are defined by $\mathbb{E}\{{( \hat{\bar\theta}_\text{x-B2U}-\bar\theta_\text{x-B2U})}^2\}$ and $\mathbb{E}\{{( \hat{\bar\theta}_\text{y-B2U}-\bar\theta_\text{y-B2U})}^2\}$, respectively.
As can be readily observed, the numerical results match exactly with the analytical results, thereby validating the correctness of the analytical expressions.
Although the MUSIC method is slightly better than the proposed method, it requires more training time and has higher computational complexity  than the proposed method.
Moreover, as expected in Corollary \ref{c1}, increasing the number of BS antennas can greatly reduce the mean square error (MSE). This is because with a large number of BS antennas,  more angle-related data can be obtained, based on which we can estimate the angles more accurately. In addition, we can see that the MSE is a decreasing function with respect to the Rician K-factor, because a larger Rician K-factor means less uncertainty arising from NLOS paths. Also, as the received SNR becomes larger, the MSE gradually decreases due to less uncertainty caused by noise.
\begin{figure}[!ht]
  \centering
  \includegraphics[width=3.5 in]{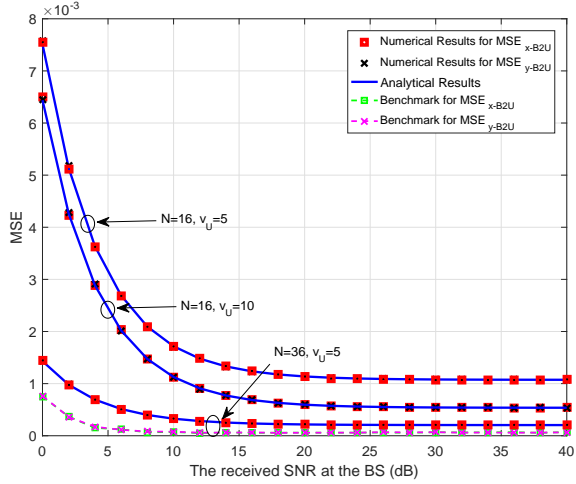}
   \caption{ Performance of the proposed ML
   estimator.}
  \label{f1}
\end{figure}

Fig. \ref{f2} illustrates the MSE for the calculated  effective angles from the IRS to the user given by Lemma \ref{L1}, where the  analytical results are generated by Theorem \ref{t2}.  Note that  $\text{MSE}_\text {x-I2U}$ and $\text{MSE}_\text{y-I2U}$ are defined by $\mathbb{E}\{{( \hat{\bar\theta}_\text{x-I2U}-\bar\theta_\text{x-I2U})}^2\}$ and $\mathbb{E}\{{( \hat{\bar\theta}_\text{y-I2U}-\bar\theta_\text{y-I2U})}^2\}$, respectively. We can see that the numerical curves match the analytical curves well.
Besides, the MSE decreases with the received SNR at the BS. This is intuitive because the calculation of  effective angles from IRS to the user relies on the estimated angles from the BS to the user.
Moreover, increasing the ratio $Ra$ would severely  degrade the accuracy of estimated  effective angles. The reason is that a larger $Ra$ means that the  user is  closer to the IRS, making the calculated angles more sensitive to the estimation error of angles  from the BS to the user .
\begin{figure}[!ht]
  \centering
  \includegraphics[width=3.5in]{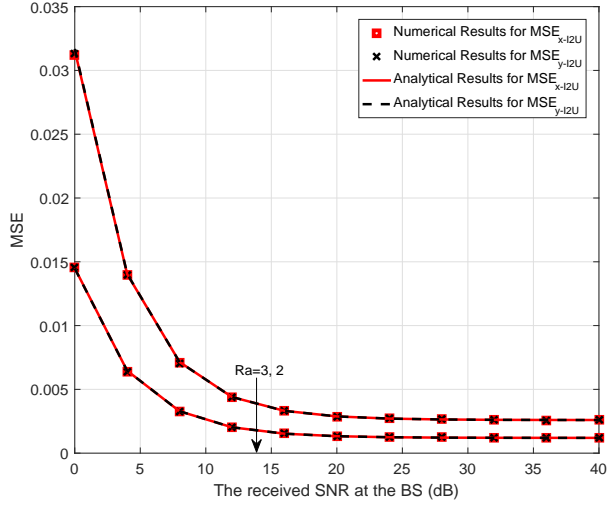}
   \caption{ The MSE for the calculated  effective angles from the IRS to the user with IRS location $(57.8 \text{m},63^\circ ,-16^\circ )$   .}
  \label{f2}
\end{figure}

Fig. \ref{f23}  illustrates the convergence of the proposed joint beamforming method given by Algorithm \ref{al2}. As expected in Proposition \ref{p4}, the proposed algorithm can gradually increase the average received SNR. Moreover, the algorithm converges after only several iterations, making it a low-complexity method. Besides, the received SNR increases as the number of reflecting elements  becomes larger, which indicates the benefit of applying a large number of reflecting elements.

\begin{figure}[!ht]
  \centering
  \includegraphics[width=3.5in]{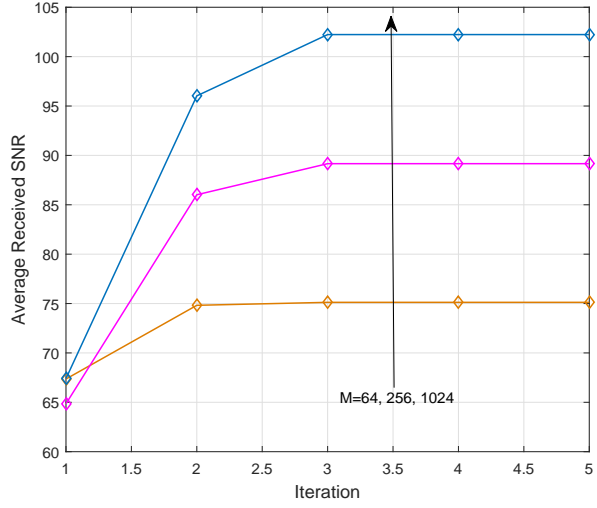}
  \caption{  The convergence of the proposed joint optimization algorithm.}
  \label{f23}
\end{figure}

Fig. \ref{f_beamforming} shows the performance of the proposed angle-based beamforming algorithm. For comparison, the algorithm in \cite{wu2019intelligent} which assumes full CSI is presented as ``Benchmark 1'', while a beamforming algorithm based on angles estimated by the   MUSIC method is presented as ``Benchmark 2''.
 As can be seen, our proposed angle-based algorithm  achieves nearly the same performance as both benchmark algorithms.
The beamforming algorithm corresponding to  ``Benchmark 1 '' achieves the best performance, but requires full CSI. Although the angle-based beamforming algorithm corresponding to  ``Benchmark 2 '' is slightly better than the proposed algorithm,  it adopts the MUSIC method to estimate angle information and thus has higher training overhead and computational complexity than the proposed angle-domain estimation method.

\begin{figure}[!ht]
  \centering
  \includegraphics[width=3.5 in]{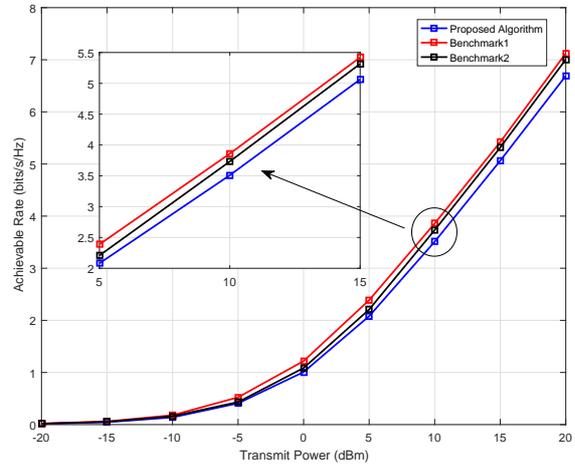}
   \caption{Performance of the proposed beamforming scheme with $N=4$ and $M=36$.}
  \label{f_beamforming}
\end{figure}

Fig. \ref{f4} depicts the impact of the number of reflecting elements on the beam pattern of the optimized BS beam. As can be observed, with $400$ reflecting elements, the main lobe is in the user direction. As the number of reflecting elements increases, the lobe  in the user direction gradually becomes smaller and the main lobe appears in the IRS direction, because increasing the number of reflecting elements can enhance the BS-IRS-user link.

\begin{figure}[htbp]
\centering
\begin{minipage}[t]{0.32\linewidth}
\centering
\subfigure { \label{f4_M2500}
\includegraphics[width=2.5in]{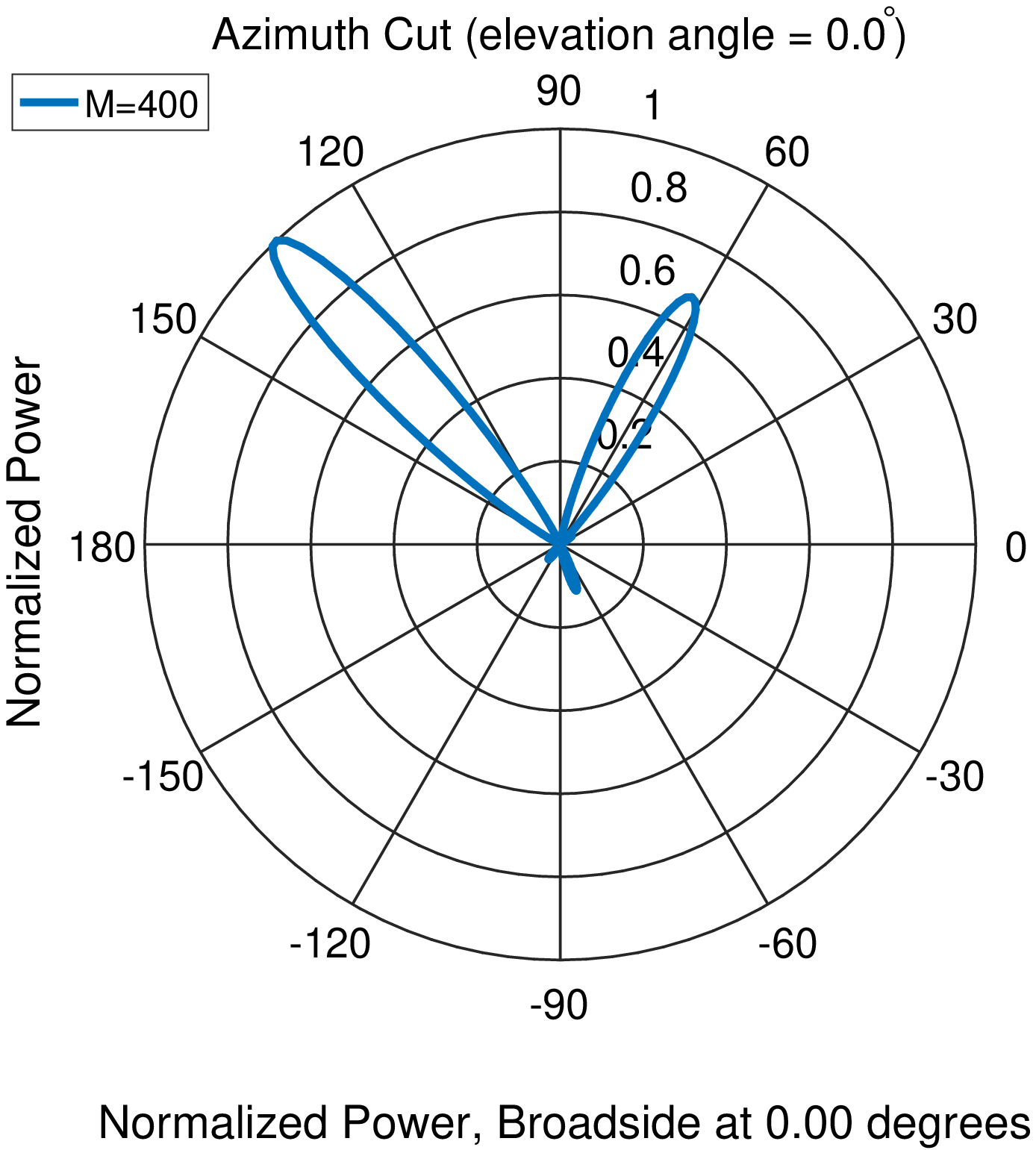}}
\end{minipage}
\begin{minipage}[t]{0.32\linewidth}
\centering
\subfigure{ \label{f4_M3600}
\includegraphics[width=2.5in]{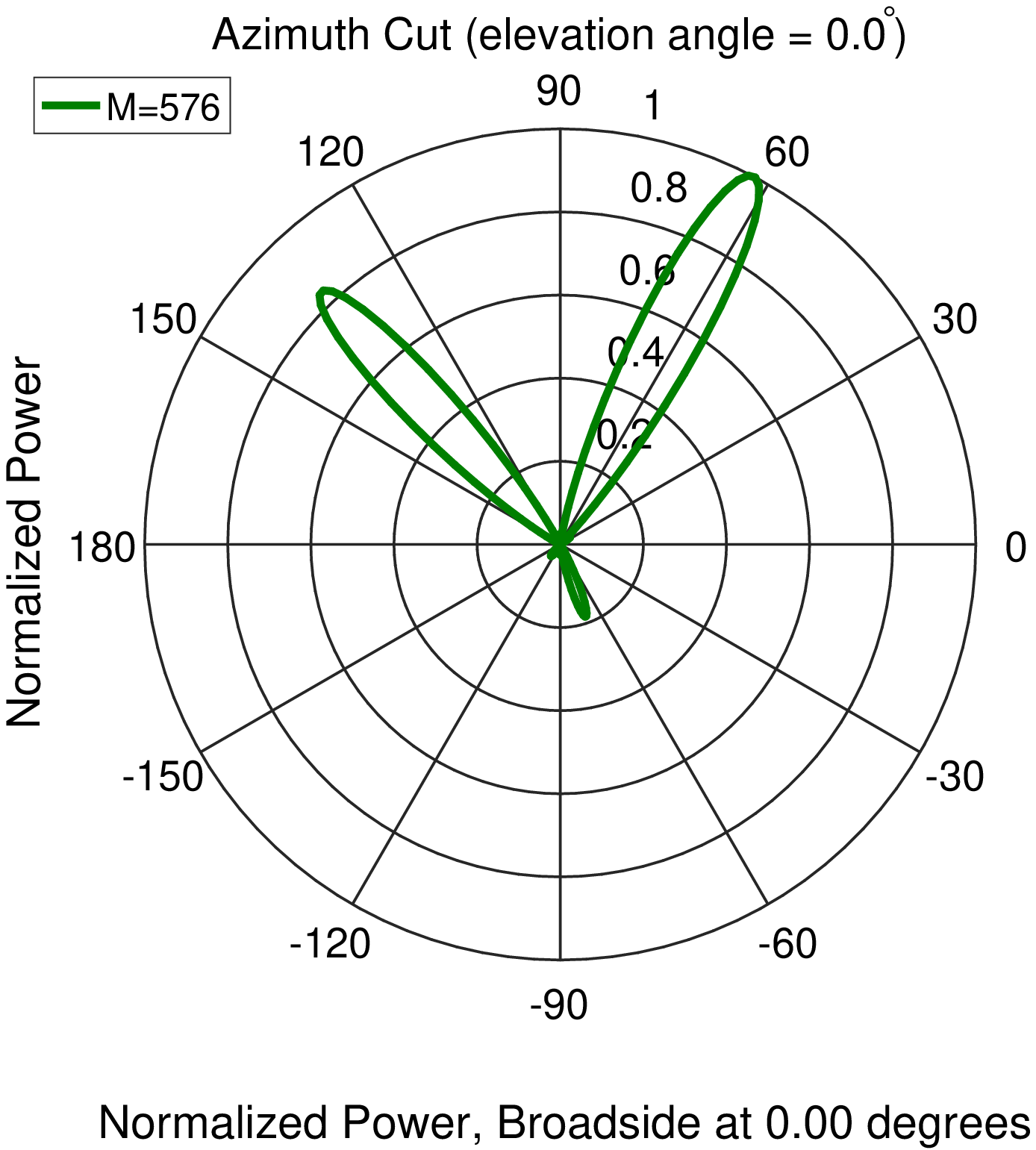}}
\end{minipage}
\centering
\begin{minipage}[t]{0.32\linewidth}
\centering
\subfigure{ \label{f4_M3600}
\includegraphics[width=2.5in]{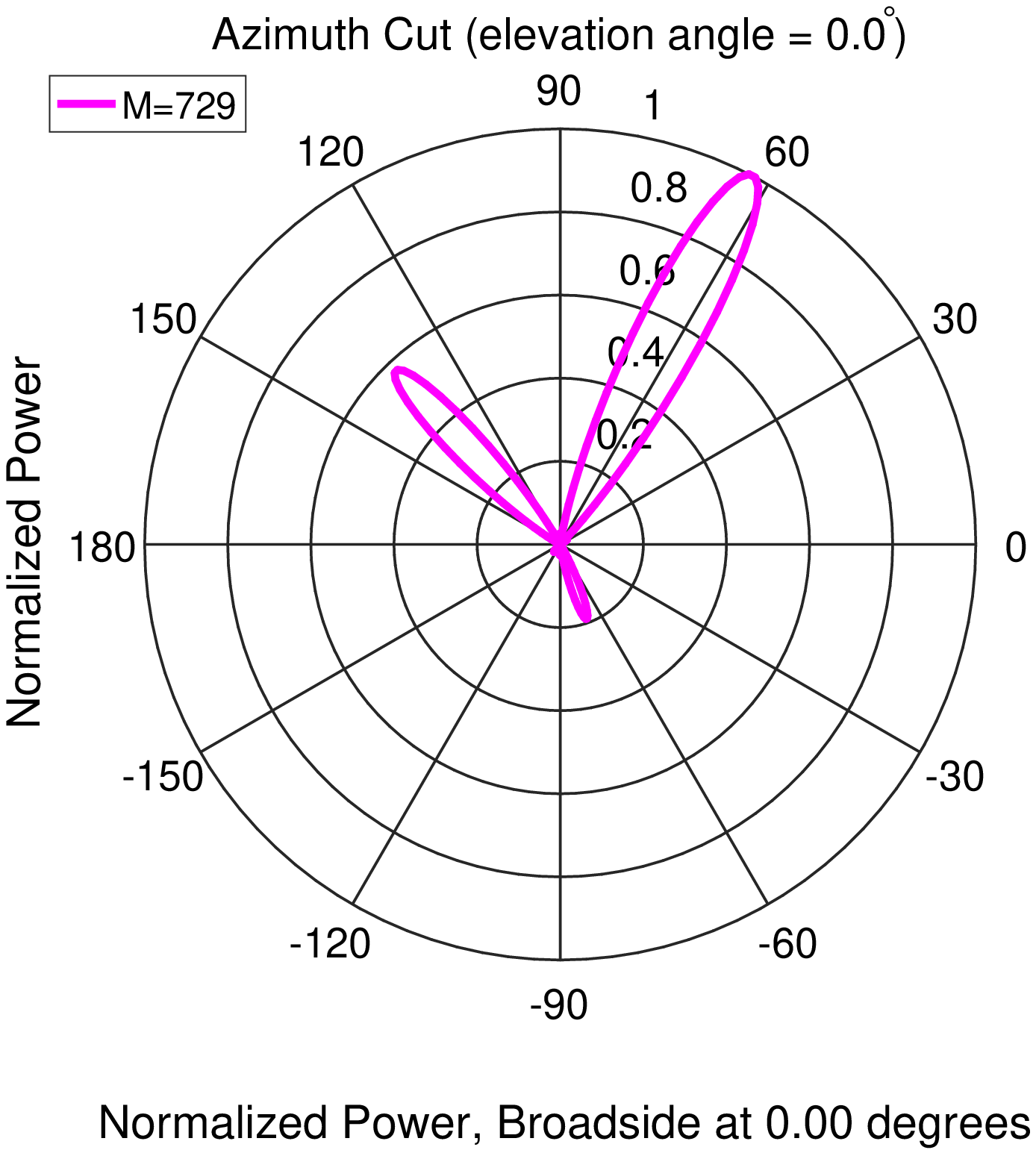}}
\end{minipage}
\caption{ The impact of the number of reflecting elements on the transmit beam pattern with $N=36$, user location $(41\text{m},133^\circ, -16^\circ)$ and IRS location $(42\text{m}, 63^\circ,-16^\circ)$.}
\label{f4}
\end{figure}

  Fig. \ref{f7} shows the achievable rate of the considered system with different configurations, where the curves associated with ``Approximate rate'' and ``Limit''  are plotted according to Corollary \ref{c2} and Proposition \ref{p3}, respectively. As can be readily observed, the three curves corresponding to  ``Limit'',``Approximate rate'' and ``With IRS and direct link'' have the similar trend, which verifies the effectiveness of our analysis
in  Corollary \ref{c2} and Proposition \ref{p3}.
Moreover, we can see that the achievable rate with  IRS is much larger than that without IRS, which indicates the great benefit of IRS.  Besides, when there is no direct link, the achievable rate becomes extremely low. This is because the acquisition of  angle information corresponding to both the BS-IRS and IRS-user links relies on the BS-user direct link. Without the direct link, the BS is not able to obtain any channel information.

\begin{figure}[!ht]
  \centering
  \includegraphics[width=3.5in]{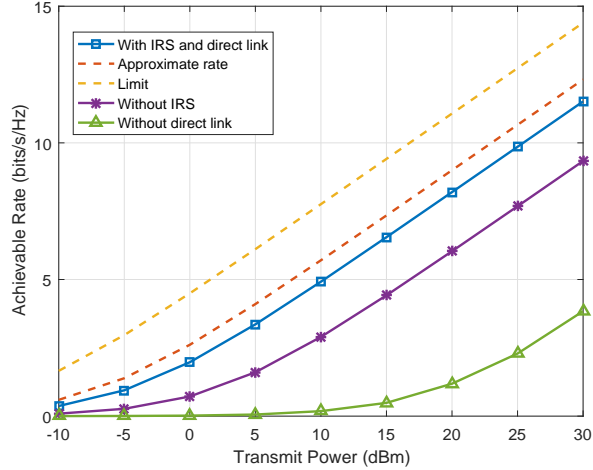}
   \caption{  Achievable rate performance with  $N=4, \chi_\text{B2I}=\chi_\text{I2U}=2.3$ and $\chi_\text{B2U}=2.8$.}
  \label{f7}
\end{figure}

\section{conclusion}\label{s6}
This paper considers an IRS-aided system from an angle-domain aspect. The ML estimators for the  effective angles from the BS to the user are provided, based on which the  effective angles from the IRS to the user are calculated.
It has been shown that increasing the number of BS antennas can significantly reduce the estimation error. Also,  placing the IRS closer to the BS would lead to a smaller estimation error of  the estimated angles from the IRS to the user. Then, exploiting the estimated angles, a joint optimization algorithm of BS beamforming and IRS beamforming has been proposed, which achieves  similar performance to two   benchmark algorithms based on full CSI  and the MUSIC method respectively. Beam patterns
of the optimized BS beam indicate that as the number of reflecting elements becomes larger, the beam becomes more focused towards the IRS direction, as should be expected.
Analysis of the achievable rate quantifies the benefit of deploying a large number of BS antennas or reflecting elements. In particular,
the BS-user link and the BS-IRS-user link can obtain power gains of order $N$ and $NM^2$, respectively.

\begin{appendices}
\section{Proof of Proposition \ref{p1}} \label{A1}

The received signal at the $n$-th antenna can be decomposed into
the LOS component and the  uncertainty component, i.e.,
\begin{align}
r_n=A_{\text{ob},n} e^{j \vartheta_n}=
A_{\text{LOS},n} e^{j \left(  \theta_{\text{q}}+ i_{N,n} \bar\theta_{\text {x-B2I}}+j_{N,n}  \bar\theta_{\text {y-B2I}}  \right)} + A_{\text{unct},n} e^{j \vartheta_{\text{unct},n}},
\end{align}
where
\begin{align}
 & A_{\text{LOS},n} e^{j \left(  \theta_{\text{q}}+ i_{N,n} \bar\theta_{\text {x-B2I}}+j_{N,n}  \bar\theta_{\text {y-B2I}}  \right)} \triangleq \sqrt{\frac{\alpha_{\text{U}} v_{\text{B2U}} }{v_{\text{B2U}}+1}} e^{\left(
 i_{N,n}  \bar\theta_{\text {x-B2I}}+j_{N,n}  \bar\theta_{\text {y-B2I}}\right)} q,\\
 &A_{\text{unct},n} e^{j \vartheta_{\text{unct},n}} \triangleq \sqrt{\frac{\alpha_{\text{U}} }{v_{\text{B2U}}+1}}\tilde{h}_{\text{B2U},n} q +n_{\text{BS},n},
\end{align}
 with $A_{\text{ob},n}, A_{\text{LOS},n}$ and $A_{\text{unct},n}$ denoting the corresponding amplitudes and $\vartheta_{\text{unct},n}$ denoting the angle of the uncertainty component.

Furthermore, denote $r_n$, its LOS component and uncertainty component in a vector form by $\overrightarrow{\bf a}$, $\overrightarrow{\bf b}$ and $\overrightarrow{\bf c}$ respectively. As such, we have $\overrightarrow{\bf a}=\overrightarrow{\bf b}+\overrightarrow{\bf c}$, yielding a triangle. According to the property of triangles and noticing that $e_n$ is approximately the angle between $\overrightarrow{\bf a}$ and $\overrightarrow{\bf b}$, we have
$
\sin e_n =\frac{A_{\text{unct},n} }{A_{\text{ob},n}}\sin \theta_{\text{ob},n}
$,
where $\theta_{\text{ob},n} \triangleq   \theta_{\text{q},n}+ i_{N,n} \bar\theta_{\text {x-B2I}}+j_{N,n}  \bar\theta_{\text {y-B2I}}  - \vartheta_{\text{unct},n}$.

Using Taylor expansion, we obtain
$
e_n=\frac{A_{\text{unct},n} }{A_{\text{ob},n}}\sin \theta_{\text{ob},n}
$,
where we follow the fact $\underset{x \to 0} {\lim}\sin x=x$.

Due to a large Rician K-factor and a high received  SNR at the BS, $A_{\text{unct},n}$ is much smaller than $A_{\text{LOS},n}$. Hence, we have
\begin{align}
e_n \approx \frac{A_{\text{unct},n} }{A_{\text{LOS},n}}\sin \theta_{\text{ob},n}=
\frac{A_{\text{unct},n} }{\sqrt{\frac{\alpha_{\text{B2U}} v_{\text{B2U}} P_q}{v_{\text{B2U}}+1}}}\sin \theta_{\text{ob},n}
 .
\end{align}

Since $A_{\text{unct},n}$ and $\vartheta_{\text{unct},n}$ are the amplitude and the angle of a complex Gaussian random variable respectively, $A_{\text{unct},n}$ follows Rayleigh distribution and $\theta_{\text{ob},n}$ is uniformly distributed. Therefore, $e_n$ is circularly symmetric Gaussian with the variance  given by
\begin{align}
\sigma_e^2=\frac{1}{A_{\text{LOS},n}^2}
\mathbb{E}\left\{ {\left| A_{\text{unct},n} \right|}^2 \right\}
\mathbb{E}\left\{  {\left|  \sin \theta_{\text{ob},n}  \right|}^2 \right\}
\overset{(a)}{=} \frac{\left( 4-\pi \right)\left(v_{\text{B2U}}+1\right)}{8 \alpha_{\text{B2U}} P_q v_{\text{B2U}} } \left( \frac{\alpha_{\text{B2U}} P_q}{v_{\text{B2U}}+1}+\sigma_{\text{BS},0}^{2} \right),
\end{align}
where $(a)$ is derived according to $\mathbb{E}\left\{ {\left| A_{\text{unct},n} \right|}^2 \right\} = \frac{4-\pi}{4} \left( \frac{\alpha_\text{B2U} P_q}{v_\text{B2U}+1}+\sigma_{\text{BS},0}^{2} \right)$ and $\mathbb{E}\left\{  {\left|  \sin \theta_{\text{ob},n}  \right|}^2 \right\} =\frac{1}{2}$.

After some algebraic manipulations, we can obtained the desired result.

\section{Proof of Theorem \ref{t1}} \label{A2}
Define ${\bm{\Delta \vartheta}}\triangleq {\left[\Delta \bar\theta_{1,m_1},..., \Delta \bar\theta_{n,m_n}, ...,\Delta \bar\theta_{\frac{N}{2},m_{\frac{N}{2}}} \right]}^T$, and $\bm{\bar \theta}\triangleq {[\bar\theta_{\text{x-B2U} },\bar\theta_{\text {y-B2U}}]}^T$. The conditional probability density function (PDF) of ${\bm{\Delta \vartheta}}$  is given by
\begin{align}
p({\bm{\Delta \vartheta}}; \bm{\bar \theta})=
\prod_{n=1}^{\frac{N}{2}} \frac{1}{\sqrt{2 \pi \sigma_{\text{pd}}^2 }} \exp \left({-\frac{1}{2\sigma_{\text{pd}}^2} {\left(\Delta \bar\theta_{n,m_n}+\left( i_{N,n}-i_{N,m_n}\right)  \bar\theta_{\text {x-B2U}}+\left( j_{N,n}-j_{N,m_n}\right)
\bar\theta_{\text {y-B2U}} \right)}^2}\right).
\end{align}

According to the Neyman-Fisher Factorization theorem, we factor the above PDF as
\begin{align}
p({\bm{\Delta \vartheta}}; \bm{\bar \theta})=
f_1\left({\bm{\Delta \vartheta}} \right)
 f_2 \left(\bf{T}\left(\bm{\Delta \vartheta} \right),\bm{\bar \theta} \right),
\end{align}
where
\begin{align}
&f_1\left({\bm{\Delta \vartheta}} \right)=\frac{1}{{\left(\sqrt{2 \pi \sigma_{\text {pd}}}\right)}^{\frac{N}{2}} }\exp \left(- \frac{1}{2 \sigma_{\text{pd}}^{2} } \sum\limits_{n=1}^{\frac{N}{2}} \Delta \bar\theta_{n,m_n}^2 \right),\\
&f_2 \left(\bf{T}\left(\bm{\Delta \vartheta} \right),\bm{\bar \theta} \right)\!=\!\exp \left(\!- \frac{1}{2 \sigma_{\text{pd}}^{2} } \sum\limits_{n=1}^{\frac{N}{2}} {\left(\left( i_{N,n}-i_{N,m_n}\right) \bar\vartheta_{\text{x-B2U}}\!+\!\left( j_{N,n}-j_{N,m_n}\right) \bar\vartheta_{\text{y-B2U}} \right)}^2
\right)\\
&\times \exp \left(\!
\!-\frac{1}{ \sigma_{\text{pd}}^{2} } \left( \bar\vartheta_{\text{x-B2U}}  T_1\!+\! \bar\vartheta_{\text{y-B2U}} T_2\right)
\right)
,\nonumber \\
&{\bf{T}}\left(\bm{\Delta \vartheta} \right)=[T_1, T_2]^T.
\end{align}
where $T_1=\sum\limits_{n=1}^{\frac{N}{2}} \left( i_{N,n}-i_{N,m_n}\right) \Delta \bar\theta_{n,m_n}$ and $T_2=\sum\limits_{n=1}^{\frac{N}{2}} \left( j_{N,n}-j_{N,m_n}\right) \Delta \bar\theta_{n,m_n}$.

As such, we obtain a sufficient statistic ${\bf{T}}\left(\bm{\Delta \vartheta} \right)$ for $\bm{\bar \theta}$. Thus, $\hat{\bar{\bm \theta}}$ should be a  function with respect to ${\bf{T}}\left(\bm{\Delta \vartheta} \right)$. Also, noticing that $\hat{\bar{\bm \theta}}$ is unbiased, we obtain
$
\hat{\bar\theta}_{\text {x-B2U}}=-\frac{T_1}
{Q_1}$ and
$\hat{\bar\theta}_{\text {y-B2U}}=-\frac{ T_2}
{Q_1}
$,
where $Q_1 \triangleq \sum\limits_{n=1}^{N/2} {\left( i_{N,n}-i_{N,m_n}\right) }^2 =\frac{1}{6}N \left( N-1\right) $.

\section{Proof of Theorem \ref{t2}} \label{A3}
We first focus on the derivation of (\ref{E3}). Acording to Lemma \ref{L1}, we can express the  effective angle from the IRS to the user along $x$ axis as
\begin{align} \label{E7}
\bar\theta_{\text{x-I2U}}=\frac{\hat{d}_{\text{I2U}}}{d_{\text{I2U}}}\hat{\bar\theta}_{\text{x-I2U}}+\frac{d_{\text{B2U}}}{d_{\text{I2U}}} \epsilon_{\text{x-B2U}},
\end{align}
where
\begin{align}
d_{\text{I2U}}\!=\!\sqrt{ {\!\left(\!x_{\text{I}}\!-\!\hat{x}_{\text{U}}\!- \!\frac{d_{\text{B2U}}}{\pi}\epsilon_{\text{x-B2U}} \!\right)\!}^2
\!+\!{\!\left(\!y_{\text{I}}\!-\!\hat{y}_{\text{U}}\!- \!\frac{d_{\text{B2U}}}{\pi}\epsilon_{\text{y-B2U}} \!\right)\!}^2 \!+\!
{\left(z_{\text{I}}- z_{\text{U}}\right)}^2
}.
\end{align}

Recall ${z}_{\text{U}}=-\frac{ d_{\text{B2U}}  \sqrt{ \pi^2- {\bar \theta}_{\text{x-B2U}}^2- {\bar \theta}_{\text{y-B2U}}^2   }  }{\pi}$. Using the Taylor expansion of $(1+x)^{\frac{1}{2}}$ at $x=0$, we have
\begin{align}
{z}_{\text{U}} \!\approx \! d_{\text{B2U}}
\left(\! -1\!+\!\frac{{\bar \theta}_{\text{x-B2U}}^2\!+\!{\bar \theta}_{\text{y-B2U}}^2}{2 \pi^2}\right)\!=\!
\hat{z}_{\text{U}}\!+\!\frac{d_{\text{B2U}}
\left(\epsilon_{\text{x-B2U}}^2+\epsilon_{\text{y-B2U}}^2\!+\!
2\epsilon_{\text{x-B2U}}{\bar \theta}_{\text{x-B2U}}\!+\!2\epsilon_{\text{y-B2U}}{\bar \theta}_{\text{y-B2U}}\right)}{2 \pi^2}.
\end{align}

Then, we write $\frac{\hat{d}_{\text{I2U}}}{d_{\text{I2U}}}$ as
\begin{align} \label{E5}
\frac{\hat{d}_{\text{I2U}}}{d_{\text{I2U}}}=1/\sqrt{1+
Q_2 },
\end{align}
with
\begin{align}
Q_2
\approx 2\frac{
 \frac{  d_{\text{B2U}}}{\pi} \left({x}_{\text{I}}-\hat x_{\text{U}} \right)\epsilon_{\text{x-B2U}}
  +\frac{  d_{\text{B2U}}}{\pi} \left({y}_{\text{I}}-\hat y_{\text{U}} \right)\epsilon_{\text{y-B2U}}-
  \frac{ d_{\text{B2U}} }{\pi^2}  \left( z_{\text{I}}-\hat{z}_{\text U}\right)
  \left(\hat{\bar \theta}_{\text{x-I2U}}\epsilon_{\text{x-B2U}}+ \hat{\bar \theta}_{\text{y-I2U}}\epsilon_{\text{y-B2U}} \right) }
  {\hat{d}_{\text{I2U}}^2}. \nonumber
\end{align}

Recall that $\hat{\bar \theta}_{\text{x-I2U}}=\frac{\left({x}_{\text{I}}-\hat x_{\text{U}} \right)\pi}{\hat{d}_{\text{I2U}}}$ and $
\hat{\bar \theta}_{\text{y-I2U}}=\frac{\left({y}_{\text{I}}-\hat y_{\text{U}} \right)\pi}{\hat{d}_{\text{I2U}}}$. We can rewrite $Q_2$ as
\begin{align}
Q_2=
\frac{2 d_\text{B2U} }{\pi^2 \hat d_\text{I2U}}
\left\{
\left(\hat{\bar\theta}_{\text{x-I2U}}-\frac{\hat{\bar\theta}_{\text{x-I2U}} \hat{\bar\theta}_{\text{z-I2U}}}{\pi} \right)\epsilon_\text{x-B2U}
+
\left(\hat{\bar\theta}_{\text{y-I2U}}-\frac{\hat{\bar\theta}_{\text{y-I2U}} \hat{\bar\theta}_{\text{z-I2U}}}{\pi} \right)\epsilon_\text{y-B2U}
\right\},
\end{align}
where $\hat{\bar \theta}_{\text{z-I2U}} \triangleq \frac{\left({z}_{\text{I}}-\hat z_{\text{U}} \right)\pi}{\hat{d}_{\text{I2U}}}$.

Using the Taylor expansion of ${ \left(1+Q_2\right)}^{-\frac{1}{2}}$ at $ Q_2=0$, (\ref{E5}) can be approximated as
\begin{align} \label{E6}
\frac{\hat{d}_{\text{I2U}}}{d_{\text{I2U}}}
=1-\frac{1}{2}Q_2+{o}\left(Q_2 \right)
{\approx}1-\frac{1}{2}Q_2.
\end{align}

Similarly, we can express $\frac{d_{\text{B2U}}}{d_{\text{I2U}}} $ as
\begin{align} \label{E21}
\frac{d_{\text{B2U}}}{d_{\text{I2U}}} =\frac{d_{\text{B2U}}}{\hat{d}_{\text{I2U}}} \left( 1-\frac{1}{2}Q_2 \right).
\end{align}

Substituting (\ref{E6}) and (\ref{E21}) into (\ref{E7}), we have
\begin{align}
&\bar\theta_{\text{x-I2U}}= \left( 1-\frac{1}{2}Q_2 \right)\hat{\bar\theta}_{\text{x-I2U}}
+\frac{d_{\text{B2U}}}{\hat{d}_{\text{I2U}}} \left( 1-\frac{1}{2}Q_2 \right)\epsilon_{\text{x-B2U}}\\
&\overset{(a)}{\approx} \left( 1-\frac{1}{2}Q_2 \right)\hat{\bar\theta}_{\text{x-I2U}}+\frac{d_{\text{B2U}}}{\hat{d}_{\text{I2U}}}\epsilon_{\text{x-B2U}}.\nonumber
\end{align}

After some algebraic manipulations, we complete the proof of (\ref{E3}).
Following the similar process, we can obtain (\ref{E4}).

\section{Proof of Proposition \ref{p2}} \label{A4}
The average received power is  given by
\begin{align}
P_{\text{r}}= \mathbb{E}\left\{  {\left|{\bf g}^T {\bf w} \right|}^2 \right\}=\mathbb{E}\left\{  {\left| {\bf g}_{\text{LOS}}^T {\bf w} \right|}^2\right\}+\mathbb{E}\left\{  {\left| {\bf g}_{\text{NLOS}}^T {\bf w} \right|}^2\right\},
\end{align}
where the  effective channel is  defined as ${\bf g}^{T}={\bf h}_{\text{B2U}}^T  + {\bf h}_{\text{I2U}}^T {\bm \Theta} {\bf H}_{\text{B2I}}$
and decomposed  as ${\bf g}_{\text{LOS}}^T+{\bf g}_{\text{NLOS}}^T$ with
\begin{align}
&{\bf g}_{\text{LOS}}^T= \sqrt{\frac{\alpha_{\text{I2U}} \alpha_{\text{B2I}} v_{\text{B2I}} v_{\text{I2U}}}{\left(v_{\text{B2I}}+1\right) \left(v_{\text{I2U}}+1\right)}}
{\bf b}^T \left(\bar\theta_{\text{x-I2U}},\bar\theta_{\text{y-I2U}} \right) {\bf \Theta} {\bf \bar H}_{\text{B2I} }+ \sqrt{\frac{ \alpha_{\text{B2U}} v_{\text{B2U}}}{v_{\text{B2U}}+1}} {\bf a}^T \left(\bar\theta_{\text{x-B2U}},\bar\theta_{\text{y-B2U}} \right) ,\\
&{\bf g}_{\text{NLOS}}^T= \sqrt{\frac{\alpha_{\text{B2I}} \alpha_{\text{I2U}}}{v_{\text{B2I}}+1}} {\bf h}_{\text{I2U}}^{T} {\bf \Theta} {\bf \tilde H}_{\text{B2I} }
+\sqrt{\frac{\alpha_{\text{B2I}} \alpha_{\text{I2U}} v_{\text{B2I}}}{\left(v_{\text{B2I}}+1\right) \left(v_{\text{I2U}}+1\right)}} \tilde{\bf h}_{\text{I2U}}^{T} {\bf \Theta} {\bf \bar H}_{\text{B2I} }
+\sqrt{\frac{\alpha_{\text{B2U}}}{v_{\text{B2U}}+1}} {\bf \tilde h}_{\text{B2U}}^{T} ,
\end{align}
where $\bar{\bf H}_\text{B2I}\triangleq {\bf b} \left(\bar\theta_{\text{x-B2Ia}},\bar\theta_{\text{y-B2Ia}} \right)
{\bf a}^T \left(\bar \theta_{\text{x-B2I}},\bar\theta_{\text{y-B2I}} \right)$.

We start with the calculation of the second term:
\begin{align} \label{E11}
\mathbb{E}\left\{  {\left| {\bf g}_{\text{NLOS}}^T {\bf w} \right|}^2\right\}
=\text{tr}\left({\bf w}{\bf w}^H  \mathbb{E}\left\{ {\bf g}_{\text{NLOS}}^* {\bf g}_{\text{NLOS}}^T \right\} \right)
=\sigma_{\text{NLOS}}^2 {\bf w}^H  {\bf w},
\end{align}
where
\begin{align}
\sigma_{\text{NLOS}}^2 =M \frac{\alpha_{\text{I2U}} \alpha_{\text{B2I}} }{v_{\text{B2I}}+1}
\left( 1+\frac{v_{\text{B2I}}}{v_{\text{I2U}}+1}\right)
+\frac{\alpha_{\text{B2U}} }{v_{\text{B2U}}+1}.
\end{align}

Then, we calculate the first term:
\begin{align}
 \mathbb{E}\left\{  {\left| {\bf g}_{\text{LOS}}^T {\bf w} \right|}^2\right\}
 =T_1+T_2+2 \text{Re}\left( T_3 \right),
\end{align}
where
\begin{align}
&T_1=\mathbb{E}\left\{ {\bf b}^T\left(\bar\theta_{\text{x-I2U}},\bar\theta_{\text{y-I2U}} \right) {\bm \phi}_b  {\bm \phi}_b^H {\bf b}^*\left(\bar\theta_{\text{x-I2U}},\bar\theta_{\text{y-I2U}} \right) \right\},\\
&T_2=\mathbb{E}\left\{ {\bf a}^T\left(\bar\theta_{\text{x-B2U}},\bar\theta_{\text{y-B2U}} \right) {\bm \phi}_a  {\bm \phi}_a^H {\bf a}^*\left(\bar\theta_{\text{x-B2U}},\bar\theta_{\text{y-B2U}} \right) \right\},\\
&T_3=\mathbb{E}\left\{ {\bf a}^T\left(\bar\theta_{\text{x-B2U}},\bar\theta_{\text{y-B2U}} \right) {\bm \phi}_a  {\bm \phi}_b^H {\bf b}^*\left(\bar\theta_{\text{x-I2U}},\bar\theta_{\text{y-I2U}} \right) \right\},
\end{align}
with
\begin{align}
{\bm \phi}_a \triangleq \sqrt{\frac{v_{\text{B2U}} \alpha_{\text{B2U}} }{v_{\text{B2U}}+1}}{\bf w}, \
{\bm \phi}_b \triangleq \sqrt{\frac{v_{\text{B2I}} \alpha_{\text{B2I}} \alpha_{\text{I2U}} }{v_{\text{B2I}}+1}} {\bf \Theta} \bar{\bf H}_{\text{B2I}} {\bf w}.
\end{align}

 1) Calculate $T_1$
 \begin{align}
 T_1= \text{tr}\left( {\bm \phi}_b {\bm \phi}_b^H  {\bf B} \right)={\bm \phi}_b^H  {\bf B} {\bm \phi}_b ,
 \end{align}
where $ {\bf B} \triangleq \mathbb{E}\left\{ {\bf b}^*\left(\bar\theta_{\text{x-I2U}},\bar\theta_{\text{y-I2U}} \right) {\bf b}^T\left(\bar\theta_{\text{x-I2U}},\bar\theta_{\text{y-I2U}} \right) \right\}$ with  elements given by
\begin{align} \label{E8}
&{\left[{\bf B}\right]}_{mn}= \mathbb{E}\left\{ \exp\left( j \left(i_{M,n}-i_{M,m}\right) \bar\theta_{\text{x-I2U}}+ j \left(j_{M,n}-j_{M,m}\right) \bar\theta_{\text{y-I2U}} \right)  \right\}.
\end{align}

Substituting (\ref{E3}) and (\ref{E4}) into (\ref{E8}), we have
\begin{align} \label{E9}
&{\left[{\bf B}\right]}_{mn}
={\left[{\bf \hat B}\right]}_{mn} \mathbb{E}\left\{ \exp\left( j \left(i_{M,mn}\varphi_{1}+ j_{M,mn}\varphi_{2} \right) \epsilon_{\text{x-B2U}}+ j \left(i_{M,mn}\varphi_{2}+ j_{M,mn}\varphi_{\text{3}} \right) \epsilon_{\text{y-B2U}} \right)  \right\},
\end{align}
where  $ {\bf \hat B} \triangleq  {\bf b}^*\left(\hat{\bar\theta}_{\text{x-I2U}},\hat{\bar\theta}_{\text{y-I2U}} \right) {\bf b}^T\left(\hat{\bar\theta}_{\text{x-I2U}},\hat{\bar\theta}_{\text{y-I2U}} \right)  $, $i_{M,mn} \triangleq \left(i_{M,n}-i_{M,m}\right)$, $j_{M,mn} \triangleq \left(j_{M,n}-j_{M,m}\right)$.

Noticing that $\epsilon_{\text{x-B2U}}$ and $\epsilon_{\text{y-B2U}}$ follow complex Gaussian distribution $\mathcal{CN}\left(0,\sigma_{\text{est}}^2\right)$, (\ref{E9}) can be calculated as
\begin{align}
{\left[{\bf B}\right]}_{mn}={\left[{\bf \hat B}\right]}_{mn} \exp \left(- \frac{1}{2}\sigma_{\text{est}}^2\left\{ {\left(i_{M,mn}\varphi_{1}+ j_{M,mn}\varphi_{2} \right)}^2 +{\left(i_{M,mn}\varphi_{2}+ j_{M,mn}\varphi_{3} \right) }^2\right\} \right).
\end{align}

2) Calculate $T_2$
\begin{align}
 T_2= \text{tr}\left( {\bm \phi}_a {\bm \phi}_a^H  {\bf A} \right)={\bm \phi}_a^H  {\bf A} {\bm \phi}_a ,
 \end{align}
where $ {\bf A} \triangleq \mathbb{E}\left\{ {\bf a}^*\left(\bar\theta_{\text{x-B2U}},\bar\theta_{\text{y-B2U}} \right)  {\bf a}^T \left(\bar\theta_{\text{x-B2U}},\bar\theta_{\text{y-B2U}} \right) \right\}$
with elements given by
\begin{align}
{\left[{\bf A}\right]}_{mn}= \mathbb{E}\left\{ \exp\left( j i_{N,mn}\bar\theta_{\text{x-B2U}}+ j j_{N,mn} \bar\theta_{\text{y-B2U}} \right)  \right\}.
\end{align}

Invoking the results given by Theorem \ref{t2}, we have
\begin{align}
{\left[{\bf A}\right]}_{mn}={\left[{\bf \hat A}\right]}_{mn} \mathbb{E}\left\{ \exp\left( j i_{N,mn}\epsilon_{\text{x-B2U}}+ j j_{N,mn} \epsilon_{\text{y-B2U}} \right)  \right\},
\end{align}
where $ {\bf \hat A} \triangleq   {\bf a}^*\left(\hat{\bar\theta}_{\text{x-B2U}},\hat{\bar\theta}_{\text{y-B2U}} \right)  {\bf a}^T \left(\hat{\bar\theta}_{\text{x-B2U}},\hat{\bar\theta}_{\text{y-B2U}} \right) $.

Due to the fact that $\epsilon_{\text{x-B2U}}$ and $\epsilon_{\text{y-B2U}}$ follow complex Gaussian distribution $\mathcal{CN}\left(0,\sigma_{\text{est}}^2\right)$, ${\left[{\bf A}\right]}_{mn}$ can be computed as
\begin{align}
{\left[{\bf A}\right]}_{mn}={\left[{\bf \hat A}\right]}_{mn}
\exp \left(- \frac{1}{2}\sigma_{\text{est}}^2\left\{ i_{N,mn}^2 +j_{N,mn} ^2\right\} \right).
\end{align}

3) Calculate $T_3$
\begin{align}
T_3=\mathbb{E}\left\{ {\bf a}^T {\bm \phi}_a  {\bm \phi}_b^H {\bf b}^* \right\}
=\text{tr}\left( {\bm \phi}_a {\bm \phi}_b^H  {\bf C} \right)={\bm \phi}_b^H  {\bf C} {\bm \phi}_a ,
\end{align}
where ${\bf C} \triangleq \mathbb{E}\left\{ {\bf b}^*\left(\bar\theta_{\text{x-I2U}},\bar\theta_{\text{y-I2U}} \right)    {\bf a}^T \left(\bar\theta_{\text{x-B2U}},\bar\theta_{\text{y-B2U}} \right)   \right\}$ with elements ${\left[{\bf C}\right]}_{mn}$ given by
\begin{align}
{\left[{\bf C}\right]}_{mn}=\mathbb{E}\left\{ \exp\left( j  \left\{i_{M,m} \bar\theta_{\text{x-I2U}}
+j_{M,m} \bar\theta_{\text{y-I2U}} - i_{N,n} \bar\theta_{\text{x-B2U}}  -j_{N,n} \bar\theta_{\text{y-B2U}}
\right\} \right)  \right\}.
\end{align}

Invoking the results in Theorem \ref{t2}, we have
\begin{align}
&{\left[{\bf C}\right]}_{mn}
={\left[{\bf \hat C}\right]}_{mn} \\
&\times \mathbb{E}\left\{ \exp\left(  j\left(i_{M,m} \varphi_{1} +j_{M,m} \varphi_{2} -i_{N,n} \right)\epsilon_{\text{x-B2U}}
+ j\left( i_{M,m} \varphi_{2} +j_{M,m} \varphi_{3} -j_{N,n}\right)\epsilon_{\text{y-B2U}}  \right)  \right\},\nonumber
\end{align}
where ${\bf \hat C} \triangleq   {\bf b}^*\left(\hat{\bar\theta}_{\text{x-I2U}},\hat{\bar\theta}_{\text{y-I2U}} \right)    {\bf a}^T \left(\hat{\bar\theta}_{\text{x-B2U}},\hat{\bar\theta}_{\text{y-B2U}} \right)   $.

 According to the fact that $\epsilon_{\text{x-B2U}}$ and $\epsilon_{\text{y-B2U}}$ follow complex Gaussian distribution $\mathcal{CN}\left(0,\sigma_{\text{est}}^2\right)$, we have
 \begin{align}
&{\left[{\bf C}\right]}_{mn}
={\left[{\bf \hat C}\right]}_{mn}
\exp \left(-\frac{1}{2}\sigma_{\text{est}}^2\left\{ {\left(i_{M,m} \varphi_{1} +j_{M,m} \varphi_{2} -i_{N,n} \right)}^2
+{\left(i_{M,m} \varphi_{2} +j_{M,m} \varphi_{3} -j_{N,n}\right)}^2\right\} \right).
\end{align}

Combining 1), 2) and 3), we have
\begin{align} \label{E10}
\mathbb{E}\left\{  {\left| {\bf g}_{\text{LOS}}^T {\bf w} \right|}^2\right\}
 ={\bm \phi}_b^H  {\bf B} {\bm \phi}_b+{\bm \phi}_a^H  {\bf A} {\bm \phi}_a+2 \text{Re}\left( {\bm \phi}_b^H  {\bf C} {\bm \phi}_a \right),
\end{align}

Combining (\ref{E10}) and (\ref{E11})  yields the desired result.

\section{Proof of Proposition \ref{p5}}\label{A7}
The constant-modulus constraint $|\xi_i|=1,i=1,\dots,M$ is equivalent to the following two constraints:
\begin{align}
    \text{tr}\left( {\bm \xi}\right)=M,\\
    \| {\bm \xi}_\infty\|\le 1.
\end{align}
Since we project ${\bf g}_{\text{gd}}=-\nabla_{{\bm \xi}} G{\left(\bm{\xi}  \right)}$ into the tangent plane of $\text{tr}\left({\bm \xi}{\bm \xi}^H \right)=M$:
$
  {\bf g}_{\text{p}}={\bf g}_{\text{gd}}-\frac{{\bf g}_{\text{gd}}^T{\bm \xi}^* {\bm \xi}}{{\left\| {\bm \xi} \right\|}^2},
$
and use ${\bf g}_{\text{p}}$ as the search direction,
the first constraint $\text{tr}\left( {\bm \xi}\right)=M$ holds.
Noticing that $\underset{p\to \infty}{\lim} \ell_{p}=\ell_{\infty}$, the constraint $\| {\bm \xi}_\infty\|\le 1$ is approximately equivalent to $\| {\bm \xi}_p\|\le 1$ with a large $p$. Then a barrier method is exploited to make  $\| {\bm \xi}_p\|\le 1$ satisfied.
To this end, we complete our proof.

\section{Proof of Corollary \ref{c2}}\label{A5}

Recall that the optimal ${\bf w}$ is  $\sqrt{P_{\text{BS}}} {\bf t}_{\text{max}}$  with   ${\bf t}_{\text{max}}$ being the eigenvector of ${\bf T}$ corresponding to the largest eigenvalue $\lambda_{\text{max}}$.
Thus, the average received signal power is given by
$
P_{\text{r}}=P_{\text{BS}} \lambda_{\text{max}}.
$

Due to the fact that the trace of a matrix is the sum of all eigenvalues, the following equation holds
$
 \text{tr}\left( {\bf T}\right)= \sum\limits_{n=1}^{N} \lambda_{T,n}.
$
Since the channel is sparse, the largest eigenvalue $\lambda_{\text{max}}$ dominates the trace $\text{tr}\left( {\bf T}\right)$. As such,  we have $\lambda_{\text{max}} \approx \text{tr}\left( {\bf T}\right)$.

Thus, the achievable rate can be approximated as
\begin{align}
R_{\text{approx}}=\log_2 \left(1+\frac{ P_{\text{BS}} \text{tr}\left( {\bf T}\right)  }{\sigma_0^2} \right).
\end{align}

Starting from  Theorem \ref{t3}, $\text{tr}\left( {\bf T}\right)$ can be expressed as
\begin{align}
\text{tr}\left( {\bf T}\right)&=
\beta_{\text{B2I2U}} \sum\limits_{m=1}^{M} \sum\limits_{n=1}^{M} { \xi}_{m}^{*}{ \xi}_{n} {\left[ \bf B \right]}_{mn}
\left( \sum\limits_{i=1}^{N}  {\left[ \bar{\bf H}_{\text{B2I}}^* \right]}_{mi}  {\left[ \bar{\bf H}_{\text{B2I}} \right]}_{ni} \right) \\
&+2\sqrt{\beta_{\text{B2I2U}} \beta_{\text{B2U}} } \text{Re} \left\{ \sum\limits_{m=1}^{M} { \xi}_{m}^{*}
 \left(\sum\limits_{i=1}^{N} {\left[ \bf C \right]}_{mi}  {\left[ \bar{\bf H}_{\text{B2I}}^* \right]}_{mi}\right)\right\} +N \beta_{\text{B2U}}+N \sigma_{\text{NLOS}}^2. \nonumber
\end{align}

Recall that $\bar{\bf H}_{\text{B2I}} ={\bf b} \left(\bar\theta_{\text{x-B2Ia}},\bar\theta_{\text{y-B2Ia}} \right)
 {\bf a}^T \left(\bar \theta_{\text{x-B2I}},\bar\theta_{\text{y-B2I}} \right)$. We can obtain the desired result.
 \end{appendices}

\bibliographystyle{IEEEtran}
\bibliography{main}{}

\end{document}